\newif\iffull\fulltrue
\makeatletter\patchcmd{\@spthm}{\phantomsection}{}{}{}}
\newtheorem{observation}{Observation}
\let\doendproof\endproof
\renewcommand\endproof{~\hfill$\qed$\doendproof}
\title{Reconfiguring Undirected Paths}
\author{%
  Erik D. Demaine%
    \thanks{MIT Computer Science and Artificial Intelligence Laboratory,
      32 Vassar St., Cambridge, MA 02139, USA, \protect\url{edemaine@mit.edu}}
\and
  David Eppstein%
    \thanks{Computer Science Department, University of California, Irvine, Irvine, CA 92697, USA, \protect\url{eppstein@uci.edu}. Supported in part by NSF grants CCF-1618301 and CCF-1616248.}
\and
  Adam Hesterberg%
    \thanks{MIT Mathematics Department,
      77 Massachusetts Ave., Cambridge, MA 02139, USA, \protect\url{achesterberg@gmail.com}}
\and
  Kshitij Jain%
  \thanks{University of Waterloo,
	Canada, \protect\url{k22jain@uwaterloo.ca}}
\and
\\Anna Lubiw%
  \thanks{University of Waterloo,
	Canada, \protect\url{alubiw@uwaterloo.ca}}
\and
 Ryuhei Uehara%
   \thanks{Japan Advanced Institute of Science and Technology, Nomi, Japan, \protect\url{uehara@jaist.ac.jp}}
\and
  Yushi Uno%
   \thanks{Graduate School of Engineering, Osaka Prefecture University, Japan, \protect\url{uno@cs.osakafu-u.ac.jp }}
}
\institute{}
\begin{document}
\maketitle

\begin{abstract}
We consider problems in which a simple path of fixed length, in an undirected graph, is to be shifted from a start position to a goal position by moves that add an edge to either end of the path and remove an edge from the other end. We show that this problem may be solved in linear time in trees, and is fixed-parameter tractable when parameterized either by the cyclomatic number of the input graph or by the length of the path. However, it is $\mathsf{PSPACE}$-complete for paths of unbounded length in graphs of bounded bandwidth.
\end{abstract}

\section{Introduction}

In this paper, we consider the problem of sliding a fixed-length simple path within an undirected graph
from a given starting position to a given goal position. The path may move in steps where we add an edge to either end of the path and simultaneously remove the edge from the opposite end, maintaining its length. Effectively, this can be thought of as sliding the path one step along its length in either direction.
The allowed movements of the path are similar to those of trains in a switchyard, or of the model trains in any of several train shunting puzzles; the edges of the path can be thought of as the cars of a train. However, unlike train tracks, we do not constrain connections at junctions of track segments to be smooth: a path that enters a vertex along an incident edge can exit the vertex along any other incident edge. Additionally, we do not distinguish the two ends of the path from each other.

\begin{figure}[t]
\centering\includegraphics[scale=0.45]{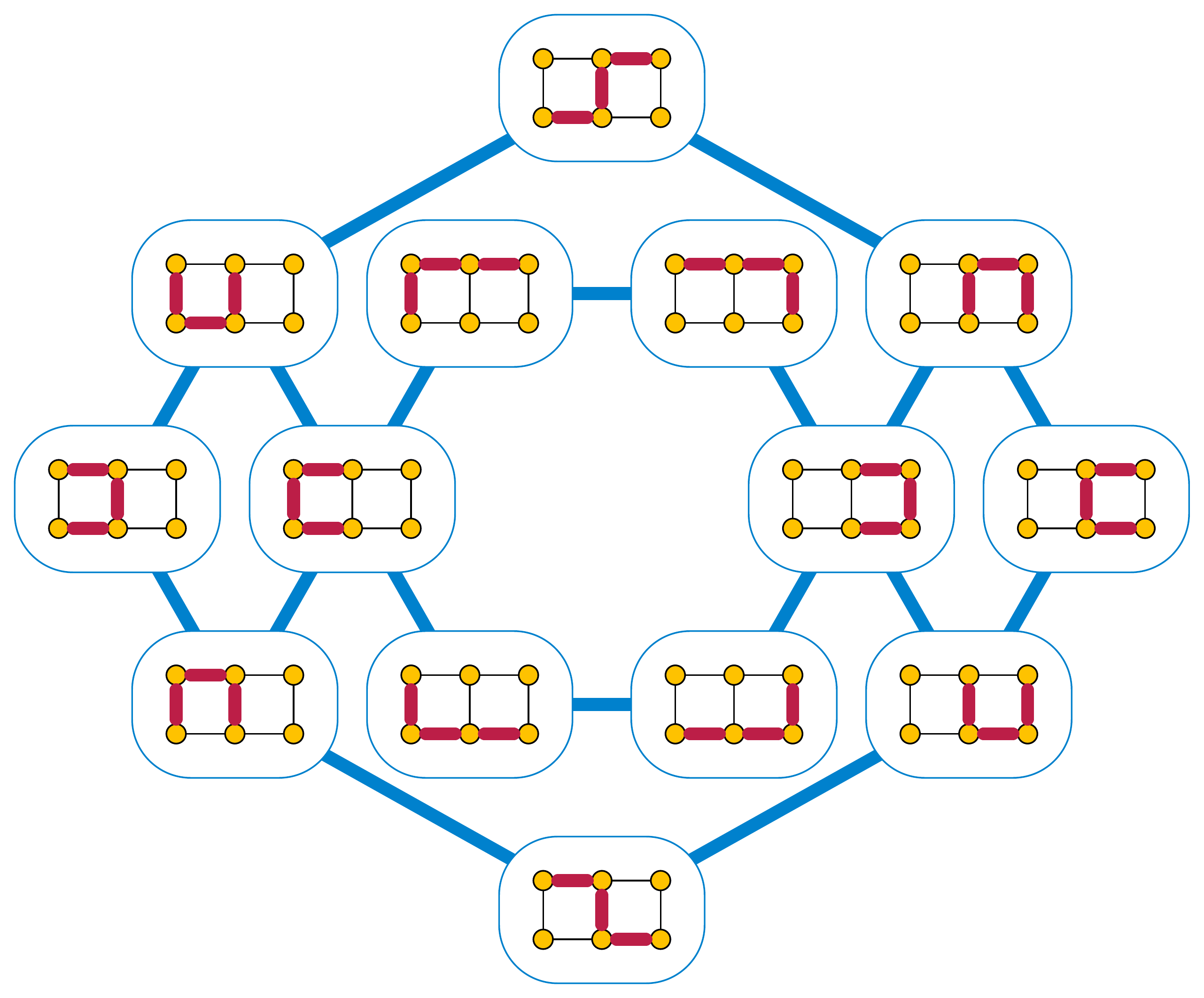}
\caption{State space of three-edge paths on a six-vertex graph}
\label{fig:statespace}
\end{figure}

Our aim is to understand the computational complexity of two natural reconfiguration problems for such paths: the \emph{decision problem}, of testing whether it is possible to reach the goal position from the start position, and the \emph{optimization problem}, of reaching the goal from the start in as few moves as possible. One natural upper bound for the complexity of these problems is the size of the state space for the problem, a graph whose vertices are paths of equal length on the given graph and whose edges represent moves from one path to another (\autoref{fig:statespace}).
If a given graph has $N$ paths of the given length, and $M$ moves from one path to another, we can solve either the decision problem or the optimization problem in time $O(M+N)$ (after constructing the state space) by a simple breadth-first search. As we will see, it is often possible to achieve significantly faster running times than this naive bound. On the other hand, the general problem is hard, even on some highly restricted classes of graphs.

Specifically, we prove the following results:
\begin{enumerate}
\item The decision problem for path reconfiguration is fixed-parameter tractable when parameterized by the length of the path. This stands in contrast to the size of the state space for the problem which (for paths of length $k$ in $n$-vertex graphs) can have as many as $\Omega(n^{k+1})$ states.
\item For paths of unbounded length in graphs parameterized by the circuit rank, both the decision and the optimization problems can be solved in fixed-parameter tractable time by state space search. The same problem can be solved in polynomial (but not fixed-parameter tractable) time when parameterized by feedback vertex set number.
\item The optimization problem for path reconfiguration in trees can be solved in linear time, even though the state space for the problem has quadratic size.
\item The decision problem for path reconfiguration is $\mathsf{PSPACE}$-complete for paths of unbounded length, even when restricted to graphs of bounded bandwidth. Therefore (unless $\mathsf{P} = \mathsf{PSPACE}$) path reconfiguration is not fixed-parameter tractable when parameterized by bandwidth, treewidth, or related graph parameters.
\end{enumerate}

\iffull
\else
Because of limited space, the detailed versions of several of our results are deferred to the full version of the paper.
\fi

\subsection{Related work}

There has been much past research on reconfiguring structures in graphs,
with motivations that include motion planning, understanding the mixing of Markov chains and bounding the computational complexity of popular games and puzzles. See, for instance, Ito et al.~\cite{ItoDemHar-TCS-11} for many early references, and 
Mouawad et al.~\cite{MouNisRam-Algo-17} for more recent work on the parameterized complexity of these problems. Often, in these problems, one considers moves in which the structure changes by the removal of one element and the addition of an unrelated replacement element (token moving) or in which an element of the structure changes only locally, by moving along an edge of the graph (token sliding). 

Several authors have considered problems of reconfiguring paths or shortest paths under token jumping or token sliding models of reconfiguration~\cite{KamMedMil-TCS-11,Bon-TCS-13,HanItoMiz-COCOON-18}. However, the path sliding moves that we consider are different.
Token sliding moves only a single vertex or edge of a path along a graph edge, while we move the whole path. And although our path sliding moves can be seen as a special case of token jumping, because they remove one edge and add a different edge, token jumping in general would allow the replacement of edges or vertices in the middle of a path, while we allow changes only at the ends of the path.

The path reconfiguration problem that we study here is also closely related to a popular video game, Snake, which has a very similar motion to the path sliding moves that we consider. Our problem differs somewhat from Snake in that we consider bidirectional movement, while in Snake the motion must always be forwards. Snake is typically played on grid graphs, and it is known to be $
\mathsf{PSPACE}$-complete to determine whether the Snake can reach a specific goal state from a given start state on generalized grid graphs~\cite{DeBOph-FUN-16}. Independently of our work,
Gupta et al~\cite{GupSaaZeh-arXiv-19} have found that reconfiguring snakes (paths that can move only unidirectionally) is fixed-parameter tractable in the length of the path, analogously to our \autoref{thm:fpt-in-path-length}.

\section{Preliminaries}

\subsection{Reconfiguration sequences and time reversal}

\begin{definition}
We define a \emph{reconfiguration step} in a graph $G$ to be a pair of edges $(e,f)$, and a \emph{reconfiguration sequence} to be a sequence $\sigma$ of reconfiguration steps.
We may \emph{apply} a reconfiguration step to a path $P$ by adding edge $e$ to $P$ and removing edge $f$, whenever $f$ is one of the two edges at the ends of $P$, $e$ is incident to the vertex at the other end, and the result of the application is another simple path. We may apply a reconfiguration sequence to a path by performing a sequence of applications of its reconfiguration steps.
If applying reconfiguration sequence $\sigma$ to path $P$ produces another path $Q$ we say that we can reconfigure $P$ into $Q$ or that $\sigma$ takes $P$ to $Q$.

If $(e,f)$ is a reconfiguration step, then we define its \emph{time reversal} to be the step $(f,e)$.
We define the time-reversal of a reconfiguration sequence $\sigma$ to be the sequence
of time reversals of the steps of $\sigma$, taken in the reverse order.
If $\sigma$ takes $P$ to $Q$, then its time reversal takes $Q$ to $P$.
For this reason, when we seek the existence of a reconfiguration sequence (the path reconfiguration decision problem) or the shortest reconfiguration sequence (the path reconfiguration optimization problem), reconfiguring a path $P$ to $Q$ is equivalent under time reversal to reconfiguring $Q$ to $P$.
We call this equivalence \emph{time-reversal symmetry}.

We define the \emph{length} $|P|$ of a path $P$ to be its number of edges, and the length~$|\sigma|$ of a reconfiguration sequence to be its number of steps.
\end{definition}

\subsection{Tree-depth}

\begin{figure}[t]
\centering\includegraphics[scale=0.5]{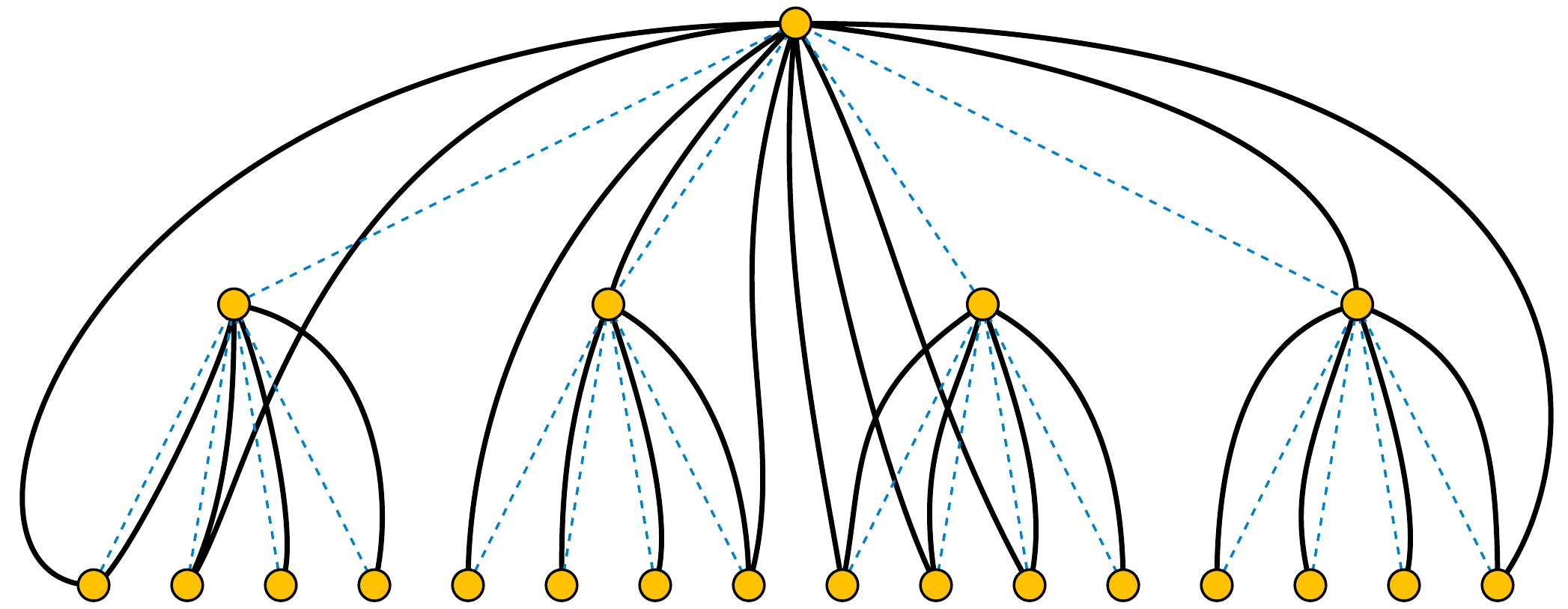}
\caption{A graph $G$ of tree-depth 2 (solid black edges) and a tree $T$ realizing this depth (dashed blue edges).}
\label{fig:treedepth}
\end{figure}

Tree-depth is a graph parameter that can be defined in several equivalent ways~\cite{NesOss-12}, but the most relevant definition for us is that the tree-depth of a connected graph $G$ is the minimum depth of a rooted tree $T$ on the vertices of $G$ such that each edge of $G$ connects an ancestor-descendant pair of $T$ (\autoref{fig:treedepth}). Here, the depth of a tree is the length of the longest root-to-leaf path. Another way of expressing the connection between $G$ and $T$ is that $T$ is a depth-first search tree for a supergraph of~$G$. For disconnected graphs one can use a forest in place of a tree, but we will only consider tree-depth for connected graphs.

Tree-depth is a natural graph parameter to use for path configuration, because it is closely connected to the lengths of paths in graphs. If a graph $G$ has maximum path-length $\ell$, then clearly its tree-depth can be at most $\ell$, because any depth-first search tree of $G$ itself will achieve that depth. In the other direction, a graph with tree-depth $d$ has maximum path-length at most $2^{d+1}-2$, as can be proven inductively by splitting any given path at the vertex closest to the root of a tree~$T$ realizing the tree-depth. Therefore, the tree-depth and maximum path-length are equivalent for the purposes of determining fixed-parameter tractability. The parameterized complexity of reconfiguration problems on graphs of bounded tree-depth has been studied by Wrochna~\cite{Wro-JCSS-18}.
However, these graphs are highly constrained, so algorithms that are parameterized by tree-depth are not widely applicable.

We will prove as a lemma that path reconfiguration is fixed-parameter tractable for the graphs of bounded tree-depth. Because these graphs have bounded path lengths, this result will be subsumed in our theorem that path reconfiguration is fixed-parameter tractable when parameterized by path-length. However, we will use this lemma as a stepping-stone to the theorem, by proving that in arbitrary graphs we can either find a structure that allows us to solve the problem easily or restrict the input to a subgraph of bounded tree-depth.

\section{Parameterized by path length}

In this section we show that path reconfiguration is fixed-parameter tractable when parameterized by path length. As discussed above, our strategy is to find a structure (\emph{loose paths}, defined below), whose existence allows us to solve the reconfiguration problem directly. When these structures do not exist or exist but cannot be used, we will instead restrict our attention to a subgraph of bounded tree-depth. We begin with the lemma that the problem is fixed-parameter tractable when parameterized by tree-depth instead of path length.

\subsection{Tree-depth}

Our method for graphs of low tree-depth is based on the fact that, when these graphs are large, they contain a large amount of redundant structure: subgraphs that are all connected to the rest of the graph in the same way as each other. When this happens, we can eliminate some copies of the redundant structures and reduce the problem to a smaller instance size.

\begin{definition}
Given a graph $G$ and a vertex set $S$, we define an \emph{$S$-flap} to be a subset $X$ of the vertices of $G$ such that $X$ is disjoint from $S$ and there are no edges from $X$ to $G\setminus\{S\cup X\}$.
We say that two $S$-flaps $X$ and $Y$ are \emph{equivalent} when the induced subgraphs $G[S\cup X]$ and $G[S\cup Y]$ are isomorphic, by an isomorphism that reduces to the identity mapping on $S$ (\autoref{fig:equiv-flaps}).
\end{definition}

\begin{observation}
\label{obs:too-many-flaps}
For any graph $G$ and any vertex set $S$,
a path of length $k$ can include vertices from at most $\lceil (k-1)/2\rceil$ $S$-flaps of $G$.
\end{observation}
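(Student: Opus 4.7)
The plan is to exploit the defining property of $S$-flaps: by definition, no edge of $G$ connects a flap vertex to the rest of the graph except through $S$. Consequently, no single edge of the path $P$ can join vertices of two distinct flaps, so any transition on $P$ between different flaps is mediated by at least one vertex of $S$.

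First I would fix a simple path $P = v_0 v_1 \cdots v_k$ of length $k$ and classify each $v_i$ as either an $S$-vertex or a \emph{flap-vertex} belonging to one of the flaps visited by $P$. The key local observation is that consecutive path vertices $v_i$ and $v_{i+1}$ must either lie in the same flap, or at least one of them must lie in $S$; they cannot lie in two distinct flaps because such an edge would contradict the $S$-flap definition.

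Next I would partition the vertex sequence $v_0, v_1, \ldots, v_k$ into maximal \emph{runs}, each either entirely inside a single flap or entirely inside $S$. Every distinct visited flap contributes at least one flap-run, so the number of distinct flaps $f$ is at most the number of flap-runs. Moreover, any two consecutive flap-runs lying in different flaps are separated by a run of $S$-vertices of length at least one. Therefore, if $P$ visits $f$ distinct flaps, the $k+1$ vertices of $P$ include at least $f$ flap-vertices (one per flap) together with at least $f-1$ interior $S$-vertices separating the flap-runs.

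A vertex count based on this decomposition---together with a careful treatment of whether $P$ begins and ends inside $S$ or inside a flap---then yields the target bound $f \le \lceil (k-1)/2\rceil$. The main obstacle I anticipate is tightening the constant: the naive inequality $k+1 \ge 2f-1$ only gives $f \le \lfloor k/2\rfloor + 1$, and sharpening this to the claimed $\lceil (k-1)/2\rceil$ will require a more careful case analysis at the endpoints of $P$, or equivalently a more refined charging of $S$-vertices to the flap transitions they mediate.
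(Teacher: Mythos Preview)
Your approach is exactly the paper's: its entire proof is the single sentence ``The path has $k+1$ vertices, and any two vertices in distinct flaps must be separated by at least one vertex of~$S$.'' So the separation-and-count argument you outline is precisely what the authors intended, and there is no further decomposition or lemma hiding behind it.

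Your worry about the constant is correct, but the resolution is not a finer endpoint analysis---it is that the bound as stated is off by one. The path $v_0\in X_1$, $v_1\in S$, $v_2\in X_2$ already has length $k=2$ and meets two distinct $S$-flaps, while $\lceil (k-1)/2\rceil = 1$. The separation argument genuinely only gives $k+1 \ge 2f-1$, i.e.\ $f \le \lfloor k/2\rfloor + 1 = \lceil (k+1)/2\rceil$, and this is tight. Consistently, the only place the observation is used (the subsequent lemma on removing redundant flaps) keeps $\lceil (k+1)/2\rceil$ flaps, not $\lceil (k-1)/2\rceil$. So do not try to sharpen: your ``naive inequality'' is both what the paper's proof establishes and what the application actually needs; the $\lceil (k-1)/2\rceil$ in the statement appears to be a typo for $\lceil (k+1)/2\rceil$.
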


\begin{figure}[t]
\centering\includegraphics[scale=0.5]{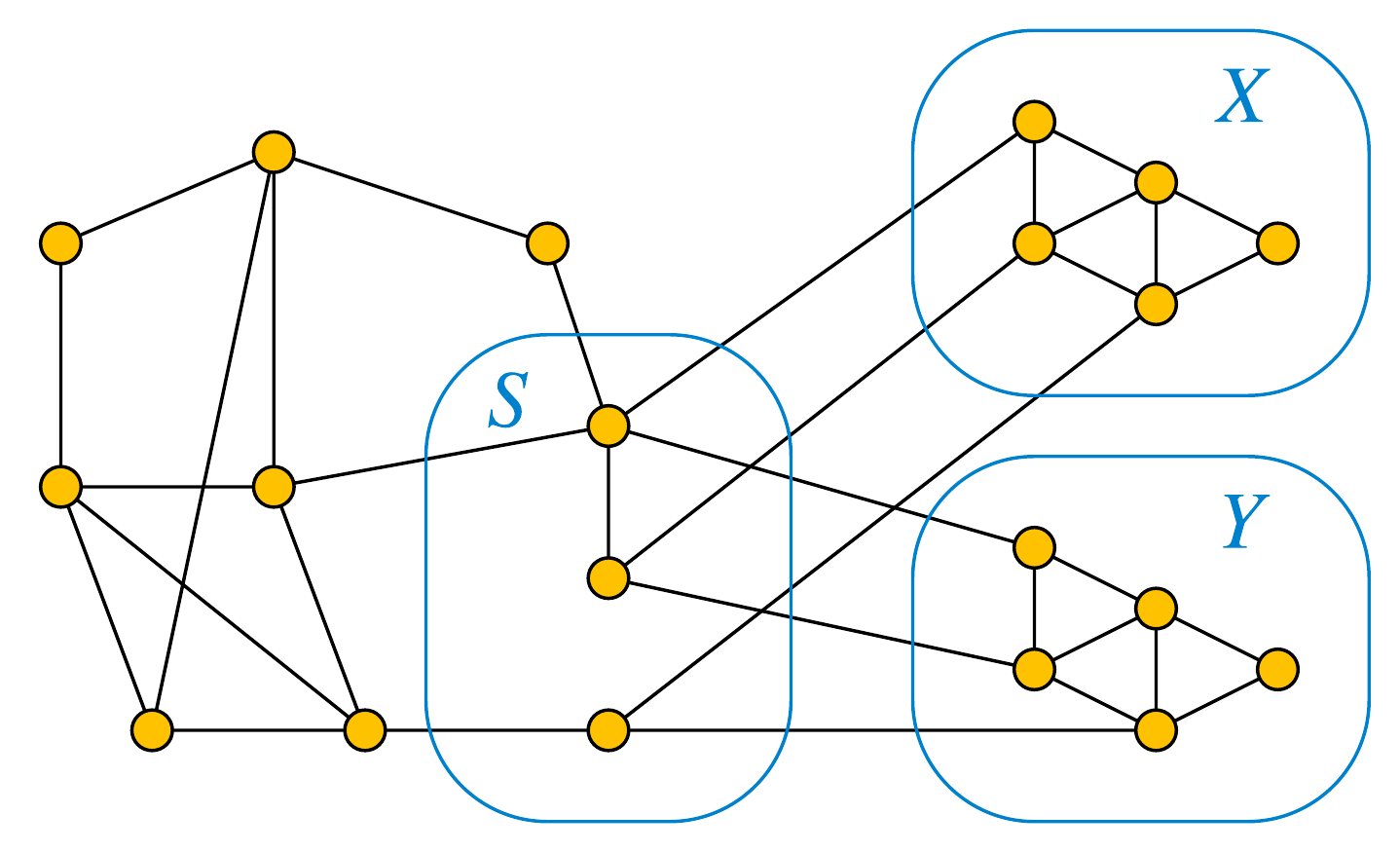}
\caption{Two equivalent $S$-flaps $X$ and $Y$ in a graph $G$}
\label{fig:equiv-flaps}
\end{figure}

\begin{proof}
The path has $k+1$ vertices, and any two vertices in distinct flaps must be separated by at least one vertex of $S$.
\end{proof}

\begin{lemma}
\label{lem:remove-flaps}
Suppose we are given an instance of path reconfiguration for paths of length $k$ in a graph $G$,
and that $G$ contains a subset $S$ that is disjoint from the start and goal positions of the path
and has more than $\lceil (k+1)/2\rceil$ pairwise equivalent $S$-flaps $X_1,X_2,\dots$, all disjoint from the start and goal. Then we can construct an equivalent and smaller instance by removing all but $\lceil (k+1)/2\rceil$of these equivalent $S$-flaps.
\end{lemma}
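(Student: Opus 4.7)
The plan is to show that $G$ and the reduced graph $G'$ (obtained by deleting the surplus equivalent flaps) admit reconfiguration sequences from the start to the goal under the same conditions. One direction is immediate because $G'$ is an induced subgraph of $G$ containing both the start and goal paths, so any reconfiguration sequence in $G'$ remains valid in $G$.

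For the converse, given a reconfiguration sequence $P_0, P_1, \ldots, P_t$ in $G$, I would construct a corresponding sequence $P'_0, P'_1, \ldots, P'_t$ in $G'$ by induction on $i$, maintaining a partial injection $\pi_i$ from the equivalent flaps (retained or deleted) that intersect $P_i$ into the retained flaps $\{X_1, \ldots, X_m\}$, where $m = \lceil (k+1)/2 \rceil$, together with a chosen isomorphism from $G[S \cup F]$ to $G[S \cup \pi_i(F)]$ fixing $S$ pointwise for each $F$ in the domain. The translated path $P'_i$ is defined as the image of $P_i$ under the map that is the identity on every vertex outside the equivalent flaps and applies the chosen isomorphism inside each equivalent flap it touches; this map is adjacency-preserving and injective on $V(P_i)$, so $P'_i$ is a simple path in $G'$ of the same length as $P_i$.

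At step $0$, the hypothesis forces $P_0$ to be disjoint from every $X_j$, so $\pi_0$ is empty and $P'_0 = P_0$. A reconfiguration step can change the set of equivalent flaps touched by the path by at most one. When no change or a loss occurs, I take $\pi_i$ to be the appropriate restriction of $\pi_{i-1}$; the main obstacle is the entry case, in which a new equivalent flap $F$ is touched. Here, the entering edge necessarily goes from a vertex of $S$ into $F$, since distinct flaps share no edges, and \autoref{obs:too-many-flaps} bounds the total number of flaps touched by $P_i$ by $\lceil (k-1)/2 \rceil$, which is strictly less than $m$. Hence at least one retained flap $X_j$ is still absent from the image of $\pi_{i-1}$, and I extend by setting $\pi_i(F) = X_j$ under any witnessing isomorphism. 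In each case, the added and removed edges of the step in $G$ translate via the chosen isomorphisms to a legal reconfiguration step in $G'$. Since the goal path is likewise disjoint from every $X_j$, the construction yields $P'_t$ equal to the goal, finishing the translation. The only remaining subtle point is that re-entering a flap after vacating it may require choosing a fresh isomorphism, but this is harmless because no vertex of the prior visit remains on the path.
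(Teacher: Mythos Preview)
Your proof is correct and follows the same approach as the paper's: translate a reconfiguration sequence in $G$ to one in $G'$ by redirecting each visit to a deleted flap into a currently unused retained flap via the equivalence isomorphism, with \autoref{obs:too-many-flaps} guaranteeing that a free target always exists. Your write-up is considerably more detailed than the paper's two-sentence sketch; the only minor slip is the claim that a single step changes the set of touched flaps ``by at most one'' (a step can simultaneously enter one flap and leave another), but your inductive maintenance of $\pi_i$ already handles that case by first restricting and then extending.
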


\begin{proof}
Any reconfiguration sequence in the original graph can be transformed into a reconfiguration sequence for the reduced graph by using one of the remaining $S$-flaps whenever the sequence for the original graph enters an $S$-flap. Because the $S$-flaps are equivalent, the moves within the flap can be mapped to each other by the isomorphism defining their equivalence, and by \autoref{obs:too-many-flaps} there will always be a free $S$-flap to use in the reduced graph.
\end{proof}

\begin{lemma}
\label{lem:fpt-tree-depth}
We can solve the decision or optimization problems for path reconfiguration in time that is fixed-parameter tractable in the tree-depth of the input graph.
\end{lemma}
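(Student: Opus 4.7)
The plan is to apply \autoref{lem:remove-flaps} repeatedly to reduce $G$ to an equivalent instance whose size is bounded by a function of $d$ alone, and then solve path reconfiguration by brute-force breadth-first search on the state space of the reduced graph.

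First, compute in FPT time a rooted tree $T$ of depth $d$ witnessing the tree-depth of $G$. By the preliminaries, the maximum path length $k$ in $G$ is at most $2^{d+1}-2$, so $k$ is also a function of $d$. Let $P^\star$ denote the union of the vertex sets of the start and goal paths, so $|P^\star|\le 2(k+1)$. Call a vertex of $T$ \emph{protected} if it lies on the root-to-$w$ path in $T$ for some $w\in P^\star$; at most $O(dk)$ vertices of $T$ are protected.

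Now process $T$ bottom-up. At each internal vertex $v$ of $T$ that is not protected, let $S_v$ be $v$ together with its proper ancestors in $T$; the ancestor-descendant property of tree-depth makes every subtree of $T$ rooted at a child of $v$ an $S_v$-flap of $G$, and $S_v$ is disjoint from $P^\star$ because $v$ is unprotected. Partition the child subtrees of $v$ into equivalence classes. Within each class, keep the at most $2(k+1)$ subtrees that meet $P^\star$, and invoke \autoref{lem:remove-flaps} to discard all but $\lceil(k+1)/2\rceil$ of the remaining copies. A bottom-up induction on $d-\operatorname{depth}(v)$ then shows that the surviving subtree rooted at $v$ has size bounded by a function $g(d-\operatorname{depth}(v))$ depending only on $d$: the number of equivalence classes of flaps of size at most $g(j{-}1)$ anchored on a set $S_v$ of size at most $d{+}1$ is bounded by a function of $d$, and each class contributes $O(k)$ surviving copies of size at most $g(j{-}1)$. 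Protected vertices of $T$ are not reduced at directly, but since there are only $O(dk)$ of them and their subtrees have already been reduced from below, they too contribute only a function of $d$ to the total. Hence the reduced graph has $N\le N(d)$ vertices.

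On this reduced graph, the state space of length-$k$ paths has size $O(N^{k+1})$, and breadth-first search answers both the decision and the optimization problems, for total running time $f(d)\cdot\mathrm{poly}(|G|)$. The main obstacle will be the correctness argument for the iterated reductions: successive invocations of \autoref{lem:remove-flaps} operate on progressively reduced graphs, and one must verify that composing the reconfiguration sequences produced by each invocation yields a valid reconfiguration in the original graph. This works because equivalence of flaps can only coarsen as the graph shrinks, protected subgraphs are never touched so the start and goal positions are preserved throughout, and the isomorphisms guaranteed by each application of the lemma compose with one another to translate any sequence in the original graph into a sequence in the final reduced graph and conversely.
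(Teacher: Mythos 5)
Your overall strategy matches the paper's: reduce to a kernel of size depending only on $d$ by repeatedly applying \autoref{lem:remove-flaps} to the child subtrees of each tree vertex (which are $S_v$-flaps), classifying them up to labeled isomorphism, and then solving the kernel by brute force. However, the way you handle the start/goal vertices introduces a real gap.

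First, the claim that ``$S_v$ is disjoint from $P^\star$ because $v$ is unprotected'' is false: unprotected only means no vertex of $P^\star$ lies in the subtree rooted at $v$, but a \emph{proper ancestor} of $v$ can perfectly well lie in $P^\star$ (e.g.\ the root could be on the start path while $v$ has no $P^\star$ descendants), and such an ancestor is part of $S_v$. Second, and more seriously, by restricting the reduction step to \emph{unprotected} internal vertices you never reduce the child count of a protected vertex. A protected vertex can have $\Omega(n)$ unprotected children whose subtrees each shrink to constant size, but whose \emph{number} is never bounded; your assertion that the $O(dk)$ protected vertices ``contribute only a function of $d$ to the total'' therefore does not follow, and the claimed kernel bound $N(d)$ breaks. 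Third, note that once you restrict to unprotected $v$, none of $v$'s child subtrees meets $P^\star$, so the clause ``keep the at most $2(k+1)$ subtrees that meet $P^\star$'' is vacuous --- a sign that the bookkeeping is inconsistent. The fix (which is essentially what the paper does, albeit informally) is to process \emph{every} vertex $v$, take $S_v$ to be its ancestor set regardless of whether it meets $P^\star$, and only exempt from removal those flaps that themselves meet $P^\star$; the disjointness requirement on $S$ in \autoref{lem:remove-flaps} is not actually used in that lemma's proof --- the isomorphism is the identity on $S$, so the transplanted reconfiguration sequence still starts at $P$ and ends at $Q$ --- and can safely be dropped. With that repair your bottom-up size induction goes through.
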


\begin{proof}
We provide a polynomial-time kernelization algorithm that uses \autoref{lem:remove-flaps} to reduce the instance to an equivalent instance whose size is a function only of the given tree-depth~$d$.
The problem can then be solved by a brute-force search on the resulting smaller instance.
We assume without loss of generality that we already have a tree decomposition $T$ of depth~$d$, as it is fixed-parameter tractable to find such a decomposition when one is not already given~\cite[p.~138]{NesOss-12}. Recall that, for graphs of tree-depth $d$, the length $k$ of the paths being reconfigured can be at most $2^{d+1}-2$.

We apply \autoref{lem:remove-flaps} in a sequence of stages so that, after stage~$i$, all vertices at height $i$ in $T$ have $O(1)$ children. As a base case, for stage~0, all vertices at height 0 in $T$ automatically have 0 children, because they are the leaves of $T$.
Therefore, suppose by induction on $i$ that all vertices at height less than $i$ in $T$ have $O(1)$ children.

For a given vertex $v$ at height~$i$, let $S_v$ be the set of ancestors of $v$ in $T$ (including $v$ itself).
Then, for each child $w$ of $v$ in $T$, let $X_w$ be
the set of descendants of $w$ (including $w$ itself). Then $X_w$ is an $S_v$-flap, because $S_v$ includes all of its ancestors in $T$ and it can have no edges to vertices that are not ancestors in $T$.
If we label each vertex in $T$ by the set of heights of its adjacent ancestors,
then the isomorphism type of $G[S_v\cup X_w]$ is determined by these labels,
so two children $u$ and $w$ of $T$ have equivalent $S_v$-flaps whenever
they correspond to isomorphic labeled subtrees of $W$.
Trees of constant size with a constant number of label values can have a constant number of isomorphism types, so there are a constant number of equivalence classes of $S_v$ flaps among the sets $W_x$. Within each equivalence class, we apply \autoref{lem:remove-flaps} to reduce the number of flaps within that equivalence class to a constant. After doing so, we have caused the vertices of $T$ at height $i$ to have a constant number of children, completing the induction proof.

To implement this method in polynomial time, we can use any polynomial time algorithm for isomorphism of labeled trees~\cite{HopWon-STOC-74}. The equivalence of subtrees of $T$ by labeled isomorphism may be finer than the equivalence of the corresponding subgraphs of $G$ by graph isomorphism (because two different labeled trees may correspond to isomorphic subgraphs) but using the finer equivalence relation nevertheless leaves us with a kernel of size depending only on~$d$. The time for this algorithm can be bounded by a polynomial, independent of the parameter.
\end{proof}

As the following observation shows, this result is nontrivial in the sense that its time bound is significantly smaller than the worst-case bound on the size of the state space for the problem.

\begin{observation}
In graphs of tree-depth $d$, the number of paths of a given length can be $\Theta(n^{2^d})$.
\end{observation}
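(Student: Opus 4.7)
The plan is to give an explicit construction. For each $d \geq 1$, I will exhibit an $n$-vertex graph $G_d$ of tree-depth $d$ with $\Theta(n^{2^d})$ simple paths of a common length, namely the maximum possible length $P(d) = 2^{d+1}-2$. Take $G_d$ to be the ``universal'' tree-depth-$d$ graph whose underlying rooted tree $T_d$ is the complete binary skeleton of depth $d-1$ with $m := \lfloor (n-2^d+1)/2^{d-1} \rfloor$ pendant leaves attached at each of its $2^{d-1}$ leaves; let $xy$ be an edge of $G_d$ iff $x$ and $y$ are in an ancestor-descendant relationship in $T_d$. Then $T_d$ itself realizes tree-depth $d$, and $G_d$ has $\Theta(n)$ vertices. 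Equivalently, $G_d$ can be built recursively: $G_1 = K_{1,n-1}$, and $G_d$ is formed by adjoining a new root $r$ adjacent to every vertex of two disjoint copies of $G_{d-1}$ on roughly $n/2$ vertices each.

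The main step is to count length-$P(d)$ paths by induction on $d$. For the base case, the length-$2$ paths in $G_1$ are ordered triples (leaf, root, leaf) with distinct leaves, of which there are $\Theta(n^2) = \Theta(n^{2^1})$. For the inductive step, the key structural claim is that any length-$P(d)$ path $Q$ in $G_d$ crosses the new root $r$ exactly once, and each of its two halves is itself a length-$P(d-1)$ path in one of the two copies of $G_{d-1}$. The first half of this claim follows because $r$ is a cut vertex separating the two copies, so any path avoiding $r$ sits inside one copy and has length at most $P(d-1) < P(d)$; the second half follows from the recurrence $P(d) = 2P(d-1)+2$, which forces both halves to achieve the maximum $P(d-1)$ in their respective copies. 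Conversely, any ordered pair of longest paths from the two copies, together with a choice of which endpoint to attach to $r$, splices into a longest path of $G_d$. This yields the recurrence $N_d = \Theta(N_{d-1}^2)$, and inserting $N_{d-1} = \Theta((n/2)^{2^{d-1}}) = \Theta(n^{2^{d-1}})$ gives $N_d = \Theta(n^{2^d})$.

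The main obstacle is the structural characterization: I must rule out longest paths that dodge $r$ (handled by the cut-vertex argument and the strict inequality $P(d-1) < P(d)$), and I must rule out longest paths in which one of the two halves is shorter than $P(d-1)$ (handled by a swap argument: replacing a suboptimal half by any longest path in that copy strictly increases the total length). Once these two facts are in hand, the inductive count is mechanical and yields the claimed $\Theta(n^{2^d})$ bound, confirming that the state-space bound $\Omega(n^{k+1}) = \Omega(n^{2^{d+1}-1})$ is essentially tight (up to a factor of roughly $n^{2^d - 1}$) in the tree-depth regime.
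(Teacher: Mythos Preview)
Your construction (a star for $d=1$, then a new universal root over two disjoint copies for larger $d$) and the inductive count are precisely the paper's lower-bound argument. The only difference is that the paper also establishes the $O(n^{2^d})$ bound for \emph{all} tree-depth-$d$ graphs by splitting an arbitrary path at its $T$-highest vertex and recursing into depth-$(d{-}1)$ subtrees, whereas you extract the matching upper bound only for your specific family via the structural characterization of longest paths; either route suffices for the ``can be $\Theta$'' statement as written.
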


\begin{figure}
\centering\includegraphics[scale=0.5]{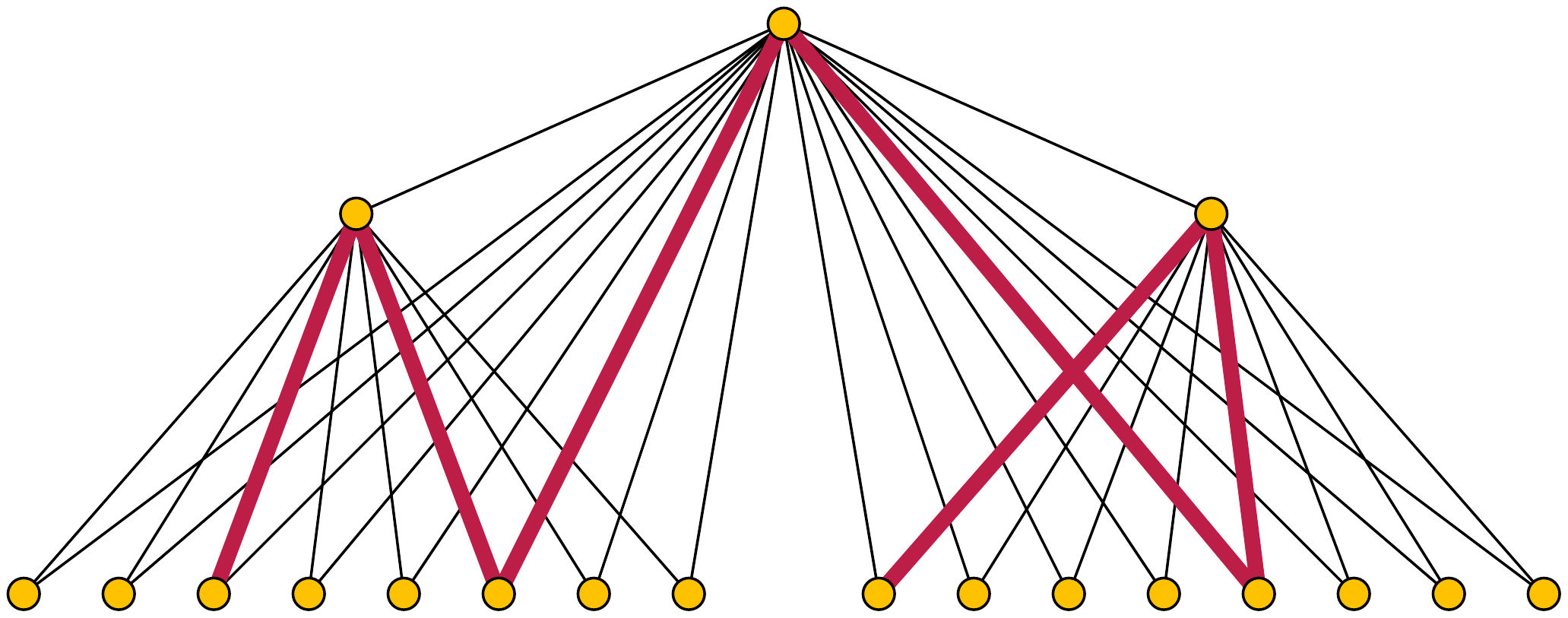}
\caption{One of $\Omega(n^4)$ paths of length 6 in a graph of tree-depth 2}
\label{fig:many-paths}
\end{figure}

\begin{proof}
Let $T$ be a tree realizing the depth of the given graph.
To prove that the number of paths is $O(n^{2^d})$, consider the vertex $v$ in any path that is highest in tree~$T$, and apply the same bound inductively for the two parts of the path on either side of $v$, both of which must live in lower-depth subtrees. The total number of paths can be at most the product of the numbers of choices for these two smaller paths.

To prove that the number of paths can be $\Omega(n^{2^d})$, let $T$ be a star as the base case for depth one (with $\Omega(n^2)$ paths of length two) and at each higher depth connect two inductively-constructed subtrees through a new root vertex $v$. Given a tree $T$ constructed in this way, let $G$ be the graph of all ancestor-descendant pairs in $T$ (\autoref{fig:many-paths}). Each two paths in the two subtrees can be connected to each other through $v$, so the number of paths in the whole graph is the product of the numbers of paths in the two subtrees.
\end{proof}

Therefore, an algorithm that searched the entire state space would only be in $\mathsf{XP}$, not $\mathsf{FPT}$.

\subsection{Loose paths}

We have seen that graphs without long paths are easy for path reconfiguration. Next, we show that graphs with long paths are also easy. The following definition is central to this part of our results:

\begin{figure}[t]
\centering\includegraphics[scale=0.45]{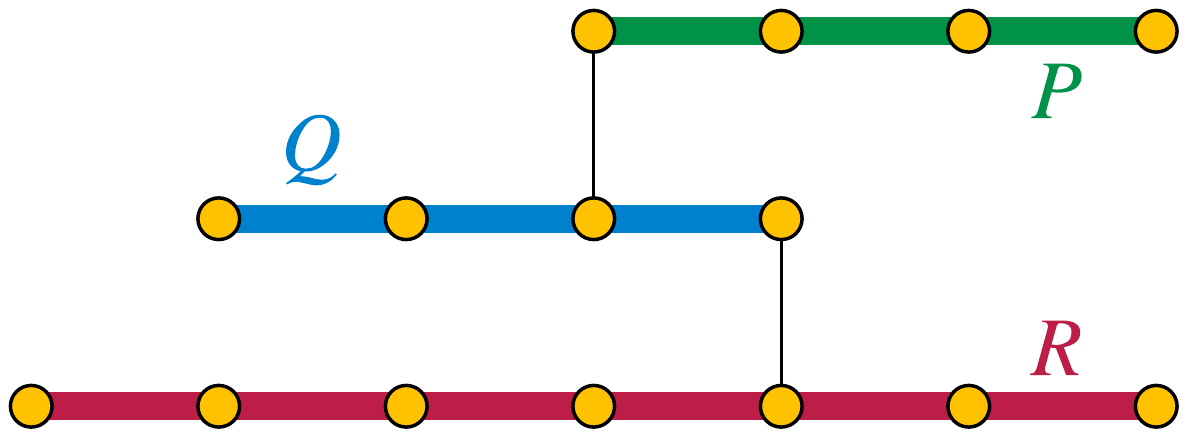}
\caption{A loose path $R$ for start and goal paths $P$ and $Q$}
\label{fig:loose-path}
\end{figure}

\begin{definition}
Consider an instance of path reconfiguration consisting of a graph $G$, a start path $P$ of length $k$, and a goal path $Q$ of length $k$. We define a \emph{loose path} to be a simple path $R$ of length $2k$ in $G$, such that $R$ is vertex-disjoint from both $P$ and $Q$ (\autoref{fig:loose-path}).
\end{definition}

\begin{lemma}
\label{lem:reachable-loose-path}
Let $R$ be a loose path for an instance $(G,P,Q)$ of path reconfiguration, such that it is possible to reconfigure path $P$ into a path that uses at least one vertex of $R$. Then for every vertex $v$ in $R$, it is possible to reconfigure path $P$ into a sub-path of $R$ for which $v$ is an endpoint.
\end{lemma}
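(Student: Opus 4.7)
My plan is to exploit the length of $R$: since $R$ has $2k+1$ vertices and the path to be placed has only $k+1$ vertices, once a path has slid onto $R$ it has room to shift freely.

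First, I would take any reconfiguration sequence $\sigma$ from $P$ to a path that uses a vertex of $R$, and truncate it at the first step that places a vertex of $R$ into the path. Call the resulting shorter sequence $\sigma_0$ and the path it produces $P_0$. By construction, immediately before the last step of $\sigma_0$ the path is disjoint from $R$, so the last step must add an edge of the form $(u,v_i)$ where $u$ is the previous endpoint (not in $R$) and $v_i$ is a vertex of $R$, and remove an edge at the opposite end. Thus $P_0$ has exactly one vertex on $R$, namely $v_i$, and $v_i$ is an endpoint of $P_0$.

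Next, I would extend $\sigma_0$ by further slides that keep pushing the $v_i$-endpoint deeper into $R$. Write the vertices of $R$ in order as $v_0, v_1, \dots, v_{2k}$. Either $i \le k$ or $i \ge k$, so one of the two directions along $R$ offers at least $k$ consecutive unused vertices beyond $v_i$; pick that direction. Inductively, after $j$ such slides the path consists of $k-j$ original non-$R$ vertices followed by $v_i,v_{i+1},\dots,v_{i+j}$ (or the symmetric version). Each new slide adds an edge of $R$ whose far endpoint is not yet in the current path (because $R$ is a simple path and we have only occupied a prefix of it) and removes an edge at the opposite end, which is a legal reconfiguration step. After $k$ such slides the path coincides with a length-$k$ sub-path of $R$ having $v_i$ as one endpoint.

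Finally, once the path lies entirely inside $R$ I would simply shift it along $R$: each slide advances the path by one position, and the length $2k$ of $R$ lets the sub-path occupy any of the $k+1$ positions $(v_j,v_{j+1},\dots,v_{j+k})$ for $j=0,\dots,k$. The set of endpoints of these sub-paths is $\{v_0,\dots,v_k\}\cup\{v_k,\dots,v_{2k}\} = V(R)$, so every vertex $v$ of $R$ appears as an endpoint of some reachable configuration. The only delicate point in the argument is the verification that the step extending $\sigma_0$ further into $R$ is always legal, which hinges on the observation that after the truncation only $v_i$ of $R$ belongs to the current path, together with the fact that $R$ has length $2k$ so that at least $k$ fresh vertices lie on one side of $v_i$; both facts follow immediately from the definition of a loose path.
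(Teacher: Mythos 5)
Your proposal is correct and takes essentially the same approach as the paper: take a minimal (or truncated) reconfiguration sequence so that the resulting path touches $R$ in exactly one endpoint vertex, slide the path onto $R$ using the fact that $R$ has length $2k$, and then slide along $R$ to make $v$ an endpoint. The only cosmetic difference is that you truncate an arbitrary sequence at the first step touching $R$ whereas the paper takes a shortest such sequence, but both yield the same key fact that exactly one vertex of $R$ is in the path and is an endpoint.
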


\begin{proof}
Consider a sequence $\sigma$ of reconfiguration steps starting from $P$ that results in a path using at least one vertex of $R$ and is as short as possible. Because $\sigma$ is as short as possible and $R$ is disjoint from $P$, the last move of $\sigma$ must cause exactly one vertex $u$ of $R$ to be an endpoint of the reconfigured path. Because $R$ has length $2k$, at least one endpoint of $R$ is at distance $k$ or more along $R$ from $u$. By sliding the path along $R$ towards this endpoint, we can reconfigure it so that it lies entirely along $R$. Again, because $R$ has length $2k$, one of the two sub-paths of $R$ ending at $v$ has length at least $k$.  By concatenating to $\sigma$ an additional sequence of steps that slide the path along $R$ (if necessary) we can reconfigure the starting path so that it lies within this sub-path and ends at $v$. 
\end{proof}

We call a loose path $R$ that meets the conditions of \autoref{lem:reachable-loose-path} a \emph{reachable loose path}.

\begin{lemma}
\label{lem:all-loose-paths}
If an instance $(G,P,Q)$ of path reconfiguration has a reachable loose path, and the graph $G$ is connected, then all loose paths for that instance are reachable.
\end{lemma}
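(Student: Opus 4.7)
The plan is to use the given reachable loose path $R$ as a ``staging area'': we first land our path on $R$ via \autoref{lem:reachable-loose-path}, then slide it off $R$ along a shortest route in $G$ to any other loose path $R'$.

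Concretely, fix an arbitrary loose path $R'$. Since $G$ is connected and $R,R'$ are nonempty, let $\pi=\pi_0\pi_1\cdots\pi_\ell$ be a shortest path in $G$ from a vertex $u=\pi_0$ of $R$ to a vertex $u'=\pi_\ell$ of $R'$. By minimality of $\pi$, the interior vertices $\pi_1,\ldots,\pi_{\ell-1}$ lie outside $R\cup R'$; moreover, if $\ell\geq 1$ then $R\cap R'=\emptyset$, since otherwise a common vertex would give a shorter (length $0$) path. Invoke \autoref{lem:reachable-loose-path} for $R$ to reconfigure $P$ into a length-$k$ sub-path $P_0$ of $R$ having $u$ as an endpoint. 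Now perform $\ell$ further reconfiguration steps: at step $i=1,\ldots,\ell$, add the edge $\pi_{i-1}\pi_i$ at the $u$-side end of the current path and delete the edge at the opposite end. After $i$ steps the current path is the concatenation of a sub-path of $R$ of length $k-i$ ending at $u$ (empty once $i\geq k$) with the prefix $\pi_0\pi_1\cdots\pi_i$, and after $\ell$ steps it has $u'\in R'$ as an endpoint, certifying that $R'$ is reachable.

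The one point needing care is verifying simplicity of every intermediate path. The $R$-part uses distinct vertices (sub-path of $R$), the $\pi$-part uses distinct vertices (portion of the simple path $\pi$), and the two parts meet only at $u=\pi_0$. Their disjointness elsewhere is exactly the content of the shortest-path observation: each $\pi_j$ with $1\leq j\leq \ell-1$ avoids $R$, and $\pi_\ell=u'$ avoids $R$ whenever $\ell\geq 1$. The degenerate case $\ell=0$ requires no sliding at all, since then $u=u'\in R\cap R'$ and $P_0$ already contains a vertex of $R'$.

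The main obstacle I expect is simply this simplicity bookkeeping; structurally the argument is short because \autoref{lem:reachable-loose-path} has already done the hard work of giving us access to any endpoint on $R$, and the shortest-path choice of $\pi$ is what prevents the receding $R$-tail and advancing $\pi$-head from colliding as the path shifts.
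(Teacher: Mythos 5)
Your proposal is correct and follows essentially the same route as the paper's proof: land the path on $R$ at a chosen endpoint via \autoref{lem:reachable-loose-path}, then slide it along a shortest path between $R$ and $R'$, using the shortest-path property to guarantee simplicity of the intermediate paths (the paper treats the vertex-sharing and disjoint cases separately where you fold the former into the degenerate $\ell=0$ case). One minor slip in your bookkeeping: for $i>k$ the intermediate path is not the full prefix $\pi_0\cdots\pi_i$ but the length-$k$ sub-path $\pi_{i-k}\cdots\pi_i$; this does not affect the conclusion, and in fact makes the simplicity check for those steps even easier.
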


\begin{proof}
Let $R$ be a reachable loose path, and $L$ be any other loose path.
If $R$ and $L$ share a vertex $v$, then it is possible to slide any sub-path of $R$ so that it includes this vertex, showing that $L$ meets the conditions of \autoref{lem:reachable-loose-path}.
If $R$ and $L$ are disjoint, let $T$ be a shortest path between them in $G$, and let $v$ be the unique vertex of $T$ that belongs to $R$. By \autoref{lem:reachable-loose-path}, we can reconfigure the starting path so that it lies along $R$ and ends at~$v$.
From there, we can slide the path along $T$ until it reaches the other endpoint of $T$, a vertex of $L$. This shows that $L$ meets the conditions of \autoref{lem:reachable-loose-path}.
\end{proof}

It will be helpful to bound the tree-depth of graphs with no loose path.

\begin{observation}
If an instance of path reconfiguration for paths of length $k$ has no loose path,
then its graph has tree-depth less than $4k$.
\end{observation}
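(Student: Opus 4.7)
The plan is to partition $V(G)$ as $S\cup (V(G)\setminus S)$ with $S=V(P)\cup V(Q)$, control each side of the partition separately, and then assemble a tree-depth decomposition from the two pieces. First, $|S|\le 2(k+1)=2k+2$ because $P$ and $Q$ each have $k+1$ vertices. The no-loose-path hypothesis is equivalent to the statement that $G-S$ contains no simple path of length $2k$, so every connected component $C$ of $G-S$ has longest simple path of length at most $2k-1$. Applying the fact (recalled in the preliminaries) that any graph's tree-depth is bounded by the length of its longest simple path via a DFS tree, each such component $C$ then has tree-depth at most $2k-1$.

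I would then assemble a rooted tree decomposition of all of $G$ as follows: arrange the vertices of $S$ as a chain (a rooted path) at the top of the tree, and under the bottom vertex of this chain attach, for each component $C$ of $G-S$, the corresponding tree-depth decomposition of $C$ as a subtree. Every edge of $G$ is then an ancestor-descendant pair in this tree: edges with both endpoints in $S$ lie along the chain which is a total order on $S$; edges internal to some component $C$ are handled by its own decomposition; and each cross edge between $S$ and some $v\in C$ is ancestor-descendant because every vertex of $S$ sits above $v$ in the combined tree.

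Summing the depths gives $(|S|-1)+1+(2k-1)=|S|+2k-1\le 4k+1$, which already establishes the qualitative $O(k)$ bound on tree-depth that the subsequent fixed-parameter tractability results rely on. The main obstacle in matching the precisely stated constant is closing the small gap between $4k+1$ and the claimed bound of strictly less than $4k$; I expect this is done by replacing the naive $|S|$-vertex chain on top with a more economical decomposition of $G[S]$ that exploits the path structure of $P$ and $Q$ — for instance, laying $P$ as one chain and then splicing $V(Q)\setminus V(P)$ beneath it at a shared vertex of $P$ and $Q$ rather than stacking all of $V(Q)$ after all of $V(P)$ — which shaves the remaining levels whenever $P$ and $Q$ meet.
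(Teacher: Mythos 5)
Your construction is essentially the paper's own proof: set $S = V(P)\cup V(Q)$, observe that the no-loose-path hypothesis forces every component of $G-S$ to have longest-path length at most $2k-1$ and hence a DFS tree of depth at most $2k-1$, place $S$ as a rooted chain at the top, and hang the component trees below its leaf. Both you and the paper thereby obtain a tree of depth $|S| + 2k - 1$.

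The gap you noticed is real, but your guess that it can be closed by a more economical decomposition of $G[S]$ is not. The paper's final step implicitly assumes $|S|\le 2k$, whereas when $P$ and $Q$ are vertex-disjoint we have $|S| = 2(k+1) = 2k+2$, giving depth $4k+1$ as you computed, not $4k-1$. Moreover $4k+1$ cannot be improved: take $G = K_{4k+2}$ with $P$ and $Q$ two vertex-disjoint paths of length $k$ inside it; then $G-S$ is $K_{2k}$, whose longest path has length $2k-1 < 2k$, so there is no loose path, yet the tree-depth of $K_{4k+2}$ is exactly $4k+1$. So the observation as stated (``less than $4k$'') is off by two, and the paper's own argument only establishes a bound of $4k+1$; your analysis is actually more careful than the paper's. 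The imprecision is harmless, since every downstream use (in particular the win-win lemma) needs only an $O(k)$ bound on the tree-depth — but you should not try to shave the extra two levels off your construction, because they cannot be shaved.
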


\begin{proof}
Form a depth-first-search forest $F$ of the subgraph formed by removing all vertices of the start and goal paths. Because there is no loose path, $F$ has depth at most $2k-1$. Form a single rooted path $R$ of the vertices of the start and goal paths, in an arbitrary order. Connect $R$ and $F$ into a single tree $T$ (not necessarily a subtree of the input graph) by making each root of $F$ be a child of the leaf node of $R$. Then every edge in the given graph connects an ancestor--descendant pair in $T$, because either it connects two vertices in the depth-first-search forest or it has at least one endpoint on the ancestral path $R$. Thus, $T$ meets the condition for trees realizing the tree-depth of a graph, and its depth is at most $4k-1$, so the given graph has tree-depth at most $4k-1$.
\end{proof}

\subsection{Win-win}

We show now that we can either restrict our attention to a subgraph of bounded tree-depth or find a reachable loose path, in either case giving a structure that allows us to solve path reconfiguration.

\begin{definition}
Given an instance $(G,P,Q)$ of path reconfiguration, we say that $S$ is a \emph{reachable} set of vertices if, for every vertex $v$ in $S$, there exists a sequence of reconfiguration steps that takes $P$ into a path that uses vertex $v$. We say that $S$ is an \emph{inescapable} set of vertices if,
for every vertex $v$ that is not in $S$, there does not exist a sequence of reconfiguration steps that takes $P$ into a path that uses vertex $v$.
\end{definition}

\begin{lemma}
\label{lem:win-win}
Given an instance $(G,P,Q)$ of path reconfiguration, parameterized by the length $k$ of the start and goal paths, we can in fixed-parameter-tractable time find either a reachable loose path, or a reachable and inescapable set $S$ of vertices that induces a subgraph $G[S]$ of tree-depth at most $4k-1$.
\end{lemma}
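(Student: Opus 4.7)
The plan is to compute the set $R$ of all vertices that appear on some path reachable from $P$, and to output $S=R$ (or a reachable loose path found along the way). Since $R$ is by definition both reachable and inescapable, the only remaining question is whether $G[R]$ has a loose path or has small tree-depth. Two key observations drive the win-win: (i) any loose path that lies entirely in $G[R]$ is automatically reachable in the sense of Lemma \ref{lem:reachable-loose-path}, because each of its vertices lies on some path reachable from $P$; and (ii) if $G[R]$ contains no loose path, the earlier observation bounds its tree-depth by $4k-1$. Hence once $R$ is in hand, a single FPT test for a $2k$-edge simple path in $G[R]\setminus(V(P)\cup V(Q))$ decides which case applies.

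To compute $R$ in FPT time I would grow a set $S$ monotonically, starting with $S:=V(P)$. Each iteration has two steps. First, apply color-coding to test whether $G[S]\setminus(V(P)\cup V(Q))$ contains a simple path of length $2k$; any such path is returned immediately as a reachable loose path, because the invariant $S\subseteq R$ is maintained throughout. Failing that, $G[S]$ has tree-depth at most $4k-1$ by the earlier observation, and so for each vertex $v\in V(G)\setminus S$ adjacent to $S$ I form the augmented subgraph $G[S\cup\{v\}]$ (still of tree-depth at most $4k$) and use the FPT machinery of Lemma \ref{lem:fpt-tree-depth} to decide whether $P$ can be reconfigured, inside $G[S\cup\{v\}]$, to a path that uses $v$. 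Every $v$ passing this test is added to $S$; when no such $v$ remains, the algorithm stops and returns $S$.

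The algorithm terminates in at most $|V(G)|$ iterations since $S$ strictly grows at each non-final iteration. A standard closure argument shows that at termination $S=R$: if some reachable path escaped $S$, the earliest escaping move on a shortest reconfiguration sequence would exhibit a path still inside $G[S]$ together with an out-of-$S$ neighbor that the algorithm would have added in step two. Each iteration runs in $f(k)\cdot n^{O(1)}$ time, giving an overall FPT running time. The main obstacle is the per-vertex reachability query in step two: a naive state-space search on a graph of tree-depth $4k$ would yield only an XP bound, since the number of length-$k$ paths in a tree-depth-$d$ graph can be $\Theta(n^{2^d})$. Bringing this down to FPT requires invoking the flap-reduction kernelization of Lemma \ref{lem:remove-flaps} to shrink $G[S\cup\{v\}]$ to an instance of size depending only on $k$ before the state-space search, and verifying that this kernelization preserves the answer to the modified question ``is $v$ reachable?'' and not only to the original decision problem.
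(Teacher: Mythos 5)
Your proposal takes essentially the same route as the paper: grow a reachable set $S$ from $V(P)$ by repeatedly using the bounded-tree-depth machinery of Lemma~\ref{lem:fpt-tree-depth} on $G[S\cup\{v\}]$ to decide whether a boundary vertex $v$ can be reached, stopping either when a loose path appears in $G[S]$ (which is then reachable because it sits inside a reachable set) or when no new vertex can be added (so $S$ is inescapable and, having no loose path, has tree-depth at most $4k-1$). Your batching of all passing $v$'s into a single round instead of one at a time, and your placing of the loose-path test at the start of each round rather than just after a new vertex is admitted, are immaterial implementation choices; the correctness argument (first escaping step on a shortest sequence witnesses a boundary vertex reachable inside $G[S\cup\{v\}]$) is the same one the paper uses implicitly. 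The concern you flag at the end is genuine and is also left tacit in the paper: Lemma~\ref{lem:fpt-tree-depth} as stated decides reachability of a fixed goal path, not of a target vertex. The standard patch is to treat $v$ as a protected vertex in the flap reduction of Lemma~\ref{lem:remove-flaps} (only remove flaps disjoint from $P$, $Q$, and $v$); the proof of that lemma shows the reduced graph preserves the image of every reconfiguration sequence, hence preserves reachability of $v$, and in the resulting bounded-size kernel one can enumerate all candidate goal paths through $v$.
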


\begin{proof}
We will maintain a vertex set $S$ that is reachable and induces a subgraph of tree-depth less than $4k$ until either finding reachable loose path or finding that $S$ is inescapable and has no path. Initially, $S$ will consist of all vertices of the start path $P$; clearly, this satisfies the invariants that $S$ is reachable and has tree-depth less than $4k$.

Then, while we have not terminated the algorithm, we perform the following steps:
\begin{itemize}
\item For each edge $uv$ where $u\in S$ and $v\not\in S$, use the algorithm of \autoref{lem:fpt-tree-depth} to test whether $P$ can be reconfigured within $S\cup\{v\}$ (a graph of tree-depth at most $4k$) into a path that uses vertex $v$. If we find any single edge $uv$ for which this test succeeds, we go on to the next step. Otherwise, if no edge $uv$ passes this test, $S$ is inescapable and we terminate the algorithm.
\item Test whether the graph $S\cup\{v\}$ contains a loose path. Finding a path of fixed length is fixed-parameter tractable for arbitrary graphs~\cite{Bod-Algs-93,FelLan-JCSS-94,AloYusZwi-JACM-95} and can be solved even more easily by standard dynamic programming techniques for graphs of bounded tree-depth. If this test succeeds, the loose path must contain $v$, as the remaining vertices have no loose path. In this case, we have found a reachable loose path (as $v$ is reachable) and we terminate the algorithm.
\item Add $v$ to $S$ and continue with the next iteration of the algorithm. Because (in this case) $v$ is reachable but $S\cup\{v\}$ contains no loose path, it follows that including $v$ in $S$ maintains the invariants that $S$ be reachable and induce a subgraph with tree-depth at most $4k-1$.
\end{itemize}
Because each iteration adds a vertex to $S$, the loop must eventually terminate, either with a reachable inescapable subgraph of low tree-depth (from the first step) or with a reachable loose path (from the second step).
\end{proof}

\subsection{Fixed-parameter tractability}

We are now ready to prove our main result:

\begin{theorem}
\label{thm:fpt-in-path-length}
The path reconfiguration decision problem is fixed-parameter tract\-able when parameterized by the length of the start and goal paths.
\end{theorem}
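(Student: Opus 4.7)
The plan is to invoke \autoref{lem:win-win} once starting from $P$ and once starting from $Q$ (the latter legitimate by time-reversal symmetry), and then perform a short case analysis on the outcomes. A preprocessing step restricts $G$ to the connected component containing $P$, answering \emph{no} if $Q$ lies in a different component. Each call of \autoref{lem:win-win} returns either a reachable inescapable set inducing a subgraph of tree-depth at most $4k-1$, or a reachable loose path.

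If both calls produce inescapable sets $S_P$ and $S_Q$, then the equivalence class of $P$ under reachability consists entirely of paths inside $S_P$, so $Q$ is reachable from $P$ iff $V(Q)\subseteq S_P$ and $Q$ is reachable from $P$ inside the restricted instance $(G[S_P],P,Q)$; since $G[S_P]$ has tree-depth at most $4k-1$, \autoref{lem:fpt-tree-depth} decides this in fpt time. If one call returns an inescapable set $S_P$ and the other returns a reachable loose path $R_Q$ from $Q$ (the symmetric sub-case is identical), the answer is \emph{no}: reachability is an equivalence relation by time-reversal and concatenation, so if $P$ and $Q$ were mutually reachable then by \autoref{lem:reachable-loose-path} every vertex of $R_Q$ would also be reachable from $P$, forcing $V(R_Q)\subseteq S_P$ and contradicting the invariant enforced by the loop of \autoref{lem:win-win} that its inescapable set contains no loose path. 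Finally, if both calls return reachable loose paths, the answer is \emph{yes}: by \autoref{lem:all-loose-paths} and connectedness, either loose path is reachable from both endpoints, so by \autoref{lem:reachable-loose-path} both $P$ and $Q$ can be reconfigured to the unique length-$k$ sub-path of that loose path ending at one of its extreme endpoints, and concatenating one sequence with the time-reversal of the other reconfigures $P$ to $Q$.

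All steps run in fpt time in $k$: \autoref{lem:win-win} is fpt, \autoref{lem:fpt-tree-depth} is fpt in tree-depth at most $4k-1$, and the connectivity and subset checks are polynomial. The main obstacle I anticipate is the mixed case, where the invariant from \autoref{lem:win-win}'s algorithm that the inescapable set contains no loose path must rule out reachability through vertices lying outside the inescapable set; once that structural fact is explicitly extracted, the pieces fit together cleanly.
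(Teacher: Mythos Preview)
Your argument is correct and uses the same key lemmas as the paper, but the case structure differs. The paper invokes \autoref{lem:win-win} \emph{sequentially}: it first runs it on $(G,P,Q)$, and only if that call yields a loose path does it run it again on $(G,Q,P)$. Consequently, whenever the first call returns an inescapable set $S$, the paper simply tests $V(Q)\subseteq S$ and, if so, solves inside $G[S]$ via \autoref{lem:fpt-tree-depth}; the mixed case you devote attention to never arises in the paper's flow.

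Your parallel structure forces you to argue the mixed case directly, and the point you lean on---that the inescapable set returned by \autoref{lem:win-win} contains no loose path---is an invariant maintained in the \emph{proof} of that lemma, not part of its stated conclusion (which only guarantees tree-depth at most $4k-1$, a bound that does not by itself forbid a path of length $2k$). Your reasoning is sound once that invariant is extracted, and you flag this dependence explicitly, so there is no gap. The trade-off is that the paper's sequential organisation buys a cleaner argument by sidestepping the mixed case entirely, while your symmetric presentation makes the role of the two calls more transparent at the cost of reaching inside the proof of \autoref{lem:win-win}.
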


\begin{proof}
Our algorithm for path reconfiguration begins by applying \autoref{lem:win-win} to find either a reachable inescapable subgraph of low tree-depth or a reachable loose path.
If we find a reachable inescapable subgraph that does not include all the goal path vertices, the reconfiguration problem has no solution. If we find a reachable inescapable subgraph that does  include all the goal path vertices, we can solve the reconfiguration problem by applying \autoref{lem:fpt-tree-depth}.

If we find a reachable loose path $R$ for the given instance $(G,P,Q)$, we apply \autoref{lem:win-win} a second time, to the equivalent reversed instance $(G,Q,P)$. If we find a reachable inescapable subgraph that does not include all the vertices of the original start path $P$, the reconfiguration problem has no solution. If we find a reachable inescapable subgraph that does  include all the vertices of $P$, we can solve the reconfiguration problem by applying \autoref{lem:fpt-tree-depth}. 

If we find a second reachable loose path $R'$, one that (by time-reversal symmetry) can reach the goal configuration, then the original reconfiguration problem has a positive solution. For, in this case, we can reconfigure $P$ to a path that lies along $R$, then (by \autoref{lem:all-loose-paths}) to a path that lies along $R'$, then (by the reverse of the reconfiguration sequence found by the second instance of \autoref{lem:win-win}) to $Q$.
\end{proof}

We leave as open the question of whether a similar result can be obtained for the optimization problem.


\section{Tree-like graphs}
\label{sec:treelike}

\iffull
In this section we show that several special classes of graphs have polynomial algorithms for path reconfiguration regardless of path length. The prototypical example are the trees, for which the existence of a polynomial time algorithm follows immediately from the fact that any $n$-vertex tree has $O(n^2)$ distinct paths. In \autoref{sec:trees}, we refine this idea and provide a linear time algorithm for path reconfiguration in trees. In this section, we consider more general classes of graphs. For simplicity of exposition, rather than attempting to optimize the exponents of our algorithms (as we do for trees), we limit our work in this direction to determining which classes of graphs have polynomial-time or fixed-parameter tractable algorithms.

\subsection{Circuit rank}

The \emph{circuit rank} of an undirected graph $G$ is the minimum number $r$ such that $G$ may be decomposed into the edge-disjoint union of a forest $F$ and an $r$-edge graph $R$. If $G$ is connected, with $m$ edges and $n$ vertices, then we have the simple formula $r=m-n+1$. The forest $F$ may be chosen as any spanning forest of $G$; the remaining set $R$ of edges in $G\setminus F$ will automatically have $|R|=r$.

\begin{lemma}
Let $G$ be a connected graph formed as the edge-disjoint union of a forest $F$ and another graph $R$. Then the paths in $G$ are uniquely determined by their pairs of endpoints and by the intersection of their edge sets with $R$.
\end{lemma}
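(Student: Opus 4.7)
The plan is to prove this by a standard symmetric-difference argument that exploits the fact that $F$, being a forest, contains no cycles.

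Suppose, toward the goal, that $P$ and $P'$ are two simple paths in $G$ with the same unordered pair of endpoints $\{u,v\}$ and the same set of $R$-edges, i.e.\ $E(P)\cap E(R) = E(P')\cap E(R)$. I want to show that $E(P)=E(P')$; since the edge set of a simple path determines the path (the endpoints are the degree-1 vertices, and the sequence is then forced), this suffices.

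I will consider the symmetric difference $D := E(P)\,\triangle\,E(P')$ as a subgraph of $G$. The first step is a degree-parity check: at every vertex $w$, I claim $\deg_D(w)$ is even. If $w\notin\{u,v\}$, then $\deg_P(w)$ and $\deg_{P'}(w)$ are each either $0$ or $2$, and $\deg_D(w)=\deg_P(w)+\deg_{P'}(w)-2\deg_{P\cap P'}(w)$ is even. If $w\in\{u,v\}$, then $\deg_P(w)=\deg_{P'}(w)=1$, so the same formula again gives an even value. Hence every vertex of $D$ has even degree, so $D$ decomposes into edge-disjoint (closed) trails and therefore contains a cycle unless it is empty.

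The second step uses the hypothesis: since $E(P)$ and $E(P')$ agree on all $R$-edges, every edge of $D$ lies in $F$, i.e.\ $D\subseteq E(F)$. But $F$ is a forest and so contains no cycle. Combining this with the previous step forces $D=\emptyset$, and hence $E(P)=E(P')$, as required.

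The argument is short and I do not anticipate a serious obstacle; the only thing to be careful about is the parity computation at the endpoints $u$ and $v$ (where $P$ and $P'$ each have odd degree $1$, but the sum is still even), and the remark that an edge set of a simple path recovers the path. Both are routine, so the main content of the proof really is the observation that collapsing two $uv$-paths with the same $R$-edges produces an even subgraph of the acyclic graph $F$.
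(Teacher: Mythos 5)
Your proof is correct and follows essentially the same approach as the paper: both take the symmetric difference of the two candidate paths, observe it has even degree at every vertex (including the shared endpoints), note it lies entirely in $F$ because the $R$-edges agree, and conclude it must be empty since $F$ is acyclic. You simply spell out the parity check in more detail than the paper does.
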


\begin{proof}
Suppose for a contradiction that there existed two distinct paths $P$ and $Q$ with the same two endpoints and the same intersections with $R$. Then the symmetric difference of their edge sets, $P\triangle Q$, would be a non-empty subgraph of $F$ with even degree at every vertex, contradicting the assumption that $F$ is a forest.
\end{proof}

\begin{corollary}
In an $n$-vertex graph with circuit rank $r$, there can be at most $2^r\tbinom{n}{2}$ distinct paths.
\end{corollary}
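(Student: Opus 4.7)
The plan is to derive the corollary as an immediate counting consequence of the preceding lemma. First I would fix a spanning forest $F$ of $G$ and let $R = E(G) \setminus F$ be the remaining edges, so that $|R| = r$ by the formula for circuit rank. The lemma then tells us that any path $P$ in $G$ is uniquely determined by two pieces of data: its unordered pair of endpoints, and the subset $P \cap R$ of non-forest edges that it uses.

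Next I would count the number of possible values for each of these two pieces of data separately. The unordered pair of endpoints lies in an $n$-vertex graph, so there are at most $\binom{n}{2}$ choices (even fewer if we insist the endpoints be distinct vertices on an actual path, but this weaker bound suffices). The intersection $P \cap R$ is an arbitrary subset of the $r$-edge set $R$, so there are at most $2^r$ choices. Since the lemma asserts that no two distinct paths share both pieces of data, the total number of paths is bounded by the product $2^r \binom{n}{2}$.

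There is no real obstacle here; the content is entirely in the lemma, and the corollary is just the multiplication principle applied to two independent sources of variation. The only minor care needed is to remark that the bound $2^r \binom{n}{2}$ overcounts (not every pair of endpoints and every subset of $R$ extends to a valid simple path), so the inequality is correct even though it may be loose.
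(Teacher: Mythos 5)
Your proof is correct and matches the paper's (implicit) argument: the paper states this corollary without proof, precisely because it follows by the multiplication principle from the preceding lemma in exactly the way you describe. Your remark that the bound is loose is a nice observation but not needed.
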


\begin{corollary}
Path reconfiguration (in both the decision and optimization problems) can be solved by breadth-first search of the state space in time that is fixed-parameter tractable in the circuit rank.
\end{corollary}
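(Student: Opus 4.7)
The plan is to run a standard breadth-first search from $P$ in the reconfiguration state space, generating neighbors on the fly, and to read off the running time bound directly from the preceding corollary.

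First, I would initialize a FIFO queue containing $P$ together with a hash table recording $P$ as visited at distance $0$. While the queue is nonempty, I would dequeue a path $P'$, halt and report its recorded distance if $P' = Q$, and otherwise enumerate all reconfiguration steps applicable to $P'$. To generate neighbors, for each of the two endpoints $u$ of $P'$ I would scan the edges of $G$ incident to $u$; for each such edge $e = uv$ whose other endpoint $v$ does not already lie on $P'$, the step that adds $e$ at $u$ and deletes the edge at the opposite end of $P'$ yields a valid neighboring path. Each candidate can be represented by its edge sequence, hashed, looked up in the visited table, and (if new) inserted into the queue, all in $O(n)$ time per candidate.

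For the time analysis I would invoke the preceding corollary, which bounds the number of distinct paths in $G$ by $2^r \binom{n}{2}$. Thus at most this many states are ever enqueued. Each enqueued state has at most $2(n-1)$ potential reconfiguration neighbors, and each neighbor is produced and processed in $O(n)$ time, for a total running time of $O(2^r \cdot n^4)$. This has the form $f(r)\cdot\mathrm{poly}(n)$, so the algorithm is fixed-parameter tractable in the circuit rank. Because BFS visits states in nondecreasing order of distance from $P$, it simultaneously settles the decision problem and, on success, returns the minimum number of reconfiguration steps from $P$ to $Q$, handling the optimization problem as well.

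I do not expect a genuine obstacle: the corollary already encapsulates the hard content, namely that the state space is small. The remaining steps are the routine facts that reconfiguration neighbors of a given path are locally determined by $G$ and can therefore be enumerated in polynomial time without precomputing the full state space, and that equality of paths can be tested in $O(n)$ time. If a sharper polynomial factor in $n$ were desired, one could key paths directly by their pair of endpoints together with their intersection with a fixed $r$-edge set $R$, replacing hashing by direct addressing; but this refinement is not needed for \textsf{FPT} membership.
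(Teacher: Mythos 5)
Your approach matches the paper's: the corollary is meant to follow immediately from the bound $2^r\tbinom{n}{2}$ on the number of distinct paths, combined with the $O(M+N)$ breadth-first search on the state space already noted in the introduction, and your fleshed-out version supplies exactly those routine details. One small slip worth noting: filtering neighbors by the condition that $v$ ``does not already lie on $P'$'' excludes the legal step in which $v$ is the opposite endpoint of $P'$ (a move that rotates the path one step around a cycle); the correct test is that $v$ avoid the vertices of the \emph{resulting} path, i.e., $P'$ with its dropped endpoint removed, so your enumeration as written is incomplete, though trivially fixable and not affecting the $f(r)\cdot\mathrm{poly}(n)$ bound.
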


\subsection{Feedback vertex set}

A \emph{feedback vertex set} in a graph $G$ is a subset of vertices the removal of which would leave a forest. The \emph{feedback vertex set number} of $G$ is the minimum size of a feedback vertex set. (Analogously, the circuit rank can be thought of as the feedback edge set number.)

\begin{lemma}
In an $n$-vertex graph with feedback vertex set number $\phi$, the number of paths is at most
\[
\phi!\, 2^\phi \left( \tbinom{n-\phi}{2}+(n-\phi)+1 \right)^{\phi+1}
\]
\end{lemma}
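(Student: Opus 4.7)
The plan is to let $F$ be a minimum feedback vertex set of size $\phi$ and set $T := G \setminus F$, which is a forest on $n-\phi$ vertices. The idea is to decompose any simple path in $G$ at its $F$-vertices and count the pieces separately.

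\textbf{Decomposition.} Given a simple path $P$ in $G$, let $f_1, \ldots, f_j$ (with $0 \le j \le \phi$) be the $F$-vertices on $P$ listed in order of traversal. Deleting these vertices from $P$ leaves $j+1$ (possibly empty) maximal subpaths, each contained entirely in the forest $T$. Because $T$ is a forest, any two of its vertices are joined by at most one path, so each such forest subpath is determined by its unordered pair of endpoints. Counting the empty path, single-vertex paths, and paths with two distinct endpoints separately yields at most $\binom{n-\phi}{2} + (n-\phi) + 1$ possible forest subpaths per slot.

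\textbf{Counting.} I encode each path by three ingredients: (a) the subset $S \subseteq F$ of $F$-vertices visited (at most $2^\phi$ choices), (b) an ordering of $S$ yielding the sequence $(f_1,\ldots,f_j)$ (at most $\phi!$ choices), and (c) a forest subpath for each of the $\phi+1$ slots before, between, and after the $F$-vertices (at most $\binom{n-\phi}{2}+(n-\phi)+1$ choices each, with unused slots taken to be empty). Multiplying the three factors gives $\phi!\, 2^\phi \bigl(\binom{n-\phi}{2}+(n-\phi)+1\bigr)^{\phi+1}$.

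\textbf{Main obstacle.} The delicate step is to verify that this encoding really is an overcount of paths rather than an undercount. A priori, representing a middle forest subpath by its unordered endpoint pair leaves a binary ambiguity about which endpoint attaches to each of the two flanking $F$-vertices, and analogous ambiguities arise at the end slots as well as from the two possible traversal directions of an undirected path. I expect to absorb these ambiguities into the $2^\phi$ factor (which is much more than needed for subset selection, since $\sum_{j=0}^{\phi}\binom{\phi}{j}j!\le e\cdot\phi!$), together with the slack gained from treating all $\phi+1$ slots uniformly (rather than only the $j+1$ used by the path) and allowing each unused slot to take any of the $\binom{n-\phi}{2}+(n-\phi)+1$ values rather than only \emph{empty}. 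Making this bookkeeping precise, so that every simple path of $G$ is associated with at least one distinct encoding, is the only non-routine part of the argument.
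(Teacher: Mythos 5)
Your proof takes the same route as the paper: decompose each path at its feedback vertices, encode it by the visited subset of $F$, the order of those vertices along the path, and the (at most $\phi+1$) forest subpaths, and multiply the counts. The orientation ambiguity you flag is genuine, and the paper's own proof glosses over exactly the same point: asserting that a path ``can be completely specified'' by these data is not literally correct, because recording a middle slot as an unordered endpoint pair does not determine which end attaches to which flanking $F$-vertex, so two distinct simple paths (related by flipping that slot) can share an encoding. Your proposed remedy---absorbing the at-most-twofold ambiguity per slot into the slack between $\sum_{j\le\phi}\binom{\phi}{j}j!$ and $2^\phi\phi!$ and between $M^{j+1}$ and $M^{\phi+1}$, where $M=\binom{n-\phi}{2}+(n-\phi)+1$---is sound in spirit. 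One clean way to finish is to record each slot by an \emph{ordered} endpoint pair (at most $(n-\phi)^2+1\le 2M$ values per slot), which bounds the number of directed paths by $\sum_{j=0}^{\phi}\frac{\phi!}{(\phi-j)!}\bigl((n-\phi)^2+1\bigr)^{j+1}$, then halve for reversal; checking that this is at most $2^\phi\,\phi!\,M^{\phi+1}$ does work out, but only barely in the leading term, so the bookkeeping you defer is genuinely delicate rather than routine. Since the corollary drawn from this lemma only needs a bound polynomial in $n$ for each fixed $\phi$, the exact constant is immaterial, and both your argument and the paper's are adequate for that purpose; yours is the more self-aware of the two.
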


\begin{proof}
A path can be completely specified by which vertices of the feedback vertex set are present, in what order these vertices are present along the path, and what path in the remaining forest (if any) is used before the first feedback vertex set vertex, after the last feedback vertex set vertex, and between each two feedback vertex set vertices. There are $2^\phi$ choices for which vertices of the feedback vertex set are present, at most $\phi!$ choices for how they are ordered, $\tbinom{n-\phi}{2}+(n-\phi)+1$ choices for either a path in the forest, a single vertex in the forest, or a direct connection between two feedback vertex set vertices, and at most $\phi+1$ paths in the forest (or direct connections) that must be chosen.
\end{proof}

\begin{corollary}
Path reconfiguration (in both the decision and optimization problems) can be solved by breadth-first search of the state space in time that is polynomial whenever the feedback vertex set number is bounded by a constant.
\end{corollary}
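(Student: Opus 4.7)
The plan is to combine the path-count bound from the preceding lemma with a direct analysis of breadth-first search on the state space. First I would observe that when the feedback vertex set number $\phi$ is bounded by a constant, the quantity $\phi!\,2^\phi\bigl(\tbinom{n-\phi}{2}+(n-\phi)+1\bigr)^{\phi+1}$ is polynomial in $n$, with degree $2(\phi+1)$; hence the number $N$ of distinct simple paths in $G$ (of any length, and in particular of the fixed length $k$ we care about) is $O(n^{2(\phi+1)})$.

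Next I would recall the generic upper bound already stated in the introduction: once the state space is built, both the decision and optimization problems reduce to a breadth-first search from the start path, and BFS runs in time $O(M+N)$ where $M$ is the number of edges of the state space. Since each vertex of the state space (a path) has at most $2(\Delta-1)$ outgoing reconfiguration steps, where $\Delta\le n-1$ is the maximum degree of $G$, we have $M=O(nN)$, which is still polynomial in $n$ when $\phi$ is constant.

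The remaining task is to describe how to construct the state space in polynomial time. I would enumerate all simple paths of length $k$ in $G$ (for example, by the structural description used in the proof of the preceding lemma: choose a subset and ordering of feedback vertices and then choose tree paths or single vertices to connect them), check validity in polynomial time per candidate, and for each path generate its at most $2(\Delta-1)$ neighbors by trying each edge incident to each endpoint. This gives a polynomial-time construction of a polynomial-size state space, after which a single BFS from the start path yields both reachability (decision) and a shortest reconfiguration sequence (optimization).

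No single step is a real obstacle; the only thing worth flagging is that the ``polynomial'' depends badly on $\phi$ (the exponent grows with $\phi$), so this argument shows membership in $\mathsf{XP}$ parameterized by feedback vertex set number, not $\mathsf{FPT}$, consistent with the statement of the corollary and in contrast to the circuit-rank case treated earlier in the section.
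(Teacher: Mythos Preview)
Your proposal is correct and matches the paper's approach: the paper treats this corollary as immediate from the preceding path-count lemma together with the generic $O(M+N)$ BFS bound from the introduction, and you have simply spelled out those details (including the $M=O(nN)$ observation and the enumeration of states). Your closing remark about $\mathsf{XP}$ versus $\mathsf{FPT}$ is exactly the point the paper makes in the sentence following the corollary.
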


Because its exponent depends on $\phi$, the resulting parameterized algorithm does not belong to the complexity class $\mathsf{FPT}$ of fixed-parameter tractable algorithms, but rather to the class $\mathsf{XP}$ of algorithms that are polynomial for constant parameter values but with an exponent depending on the parameter.

\subsection{Other tree-like graph classes}

It is natural to consider whether path reconfiguration might be solved more easily for some other classes of graphs that have a tree-like structure but whose circuit rank and feedback vertex set number may be unbounded. These include, for instance, the cactus graphs (graphs for which each biconnected component is either a single edge or a simple cycle) and the block graphs (graphs for which each biconnected component is a complete subgraph). We leave as open for future research the questions of whether path reconfiguration can be solved efficiently in these graphs.

Solving the path reconfiguration problem for the block graphs would also necessarily involve solving it for the complete graphs. The path reconfiguration decision problem is trivial for complete graphs:
if the paths have fewer than $n$ vertices, the answer is always yes, and if they have exactly $n$ vertices then the answer is yes if and only if one path is a cyclic shift or the reversal of a cyclic shift of the other. However, even for complete graphs the path reconfiguration optimization problem appears to be nontrivial.

\else

In the full version of this paper we show that several special classes of graphs have polynomial algorithms for path reconfiguration regardless of path length. The prototypical example are the trees, for which the existence of a polynomial time algorithm follows immediately from the fact that any $n$-vertex tree has $O(n^2)$ distinct paths. In the full version, we refine this idea and provide a linear time algorithm for path reconfiguration in trees. 

We also observe that the graphs of bounded circuit rank, and the graphs of bounded feedback vertex number, have polynomial algorithms for path reconfiguration, because in these graphs
the size of the state space (the number of distinct paths in the graph) is bounded by a polynomial.
For circuit rank the exponent of the polynomial is a constant, and we obtain a fixed-parameter tractable algorithm. For feedback vertex number,
the exponent depends on the feedback vertex number.
We defer the details to the full version of the paper.

\section{Hardness}

\begin{figure}[htb]
\centering\includegraphics[scale=0.3]{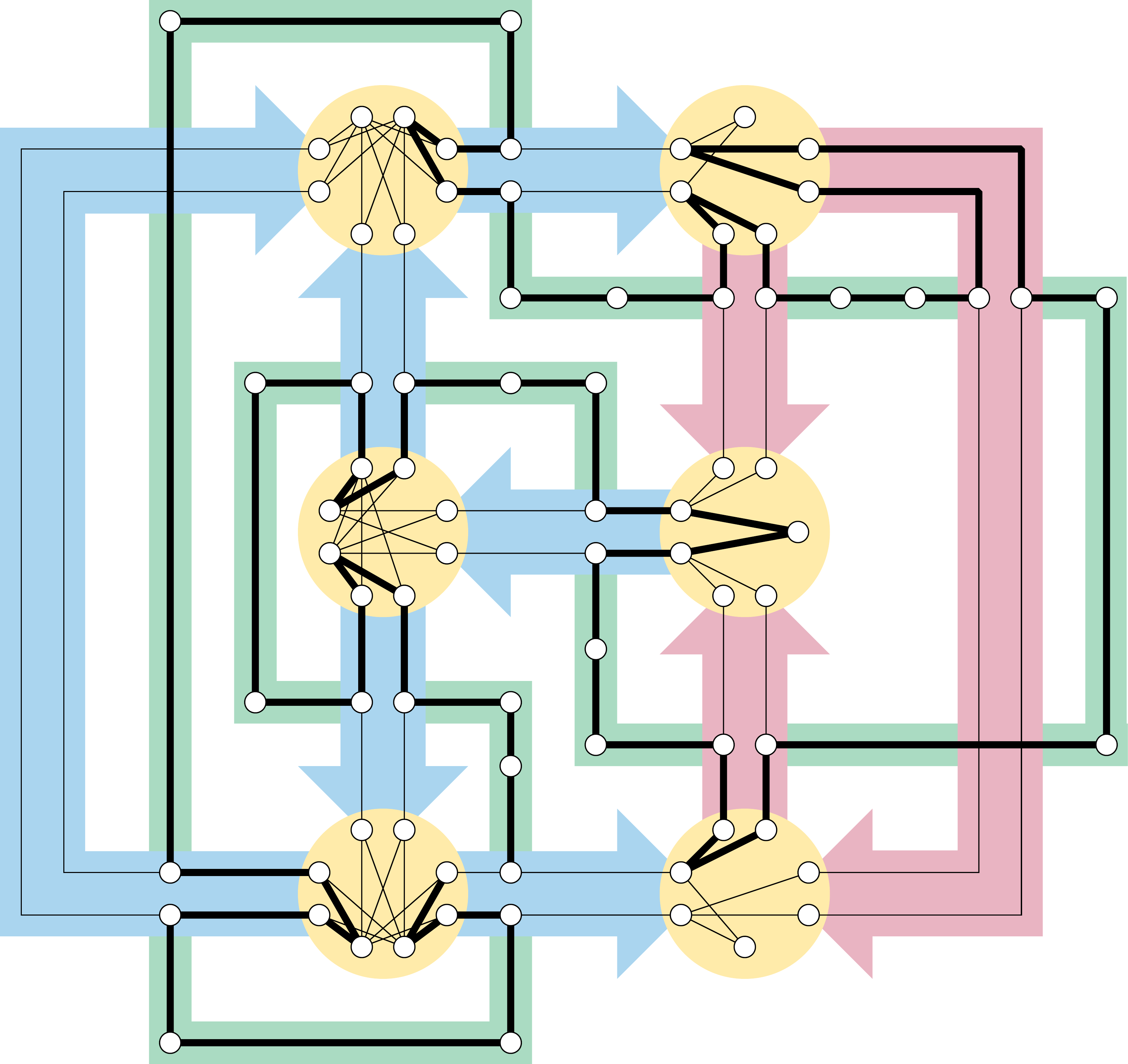}
\caption{Reduction from nondeterministic constraint logic to path reconfiguration. The underlying constraint logic instance has six vertices (yellow shaded circles) and nine edges (thick red and blue shaded arrows). Within each of the shaded circles is a vertex gadget of our reduction, and within each thick shaded arrow is an edge gadget of our reduction. The thin green shaded regions contain  connection gadgets of our reduction, which the path that is undergoing reconfiguration uses to pass from one edge or vertex gadget to another. The heavy black edges depict one possible state of the path to be reconfigured.}
\label{fig:reduction}
\end{figure}

In the full version of this  paper we describe a reduction from nondeterministic constraint logic showing that path reconfiguration (with unbounded path length) is $\mathsf{PSPACE}$-complete even on graphs of bounded bandwidth. This result rules out the possibility (unless $\mathsf{P}=\mathsf{PSPACE}$) that our results on tree-like graph classes from \autoref{sec:treelike} can be extended to another tree-like class of graphs, the graphs of bounded treewidth.

\begin{theorem}
The path reconfiguration decision problem is $\mathsf{PSPACE}$-complete, even for graphs of bounded bandwidth.
\end{theorem}

An example of our reduction is depicted in \autoref{fig:reduction}.

\bibliographystyle{splncs}
\bibliography{snake}
\end{document}
\fi


\section{In trees}
\label{sec:trees}

In this section we provide a linear time algorithm for the path reconfiguration optimization problem in trees. 
The input is a tree $T$, a starting path $P$ and an ending path $Q$ that we wish to reconfigure $P$ to.
Our strategy is to identify intermediate positions $P'$, $Q'$ such that $P$ can be reconfigured to $P'$, $Q$ can be reconfigured to $Q'$, and $P'$, $Q'$ are in \emph{critical position} meaning that $P' \cup Q'$ is contained in a path in $T$---which makes it trivial to reconfigure $P'$ to $Q'$. 
See \autoref{fig:critical-disjoint} and \autoref{fig:critical-overlap}. 
We will show that the possible critical positions can be explored efficiently.
To reconfigure $P,Q$ to a critical position $P',Q'$ we will give a reduction to  
a subproblem that we call \emph{sub-path reconfiguration}.
We first show how to solve sub-path configuration and then how to reduce the general problem to subpath reconfiguration.

\subsection{Sub-path reconfiguration}

\begin{definition}
\emph{Sub-path reconfiguration} is a problem in which we are given a tree $T$, a path $P$ in $T$, and a sub-path of $P$ from one given vertex $u$ to another given vertex $v$. 
We use the notation $(T,P,u,v)$ to denote this sub-path reconfiguration problem.
The object of the problem is to reconfigure $P$ (via the same steps as in the path reconfiguration problem) within $T$ to any other path $D$ that 
has an endpoint at $u$ and 
contains the same sub-path from $u$ to $v$, and such that each intermediate path of the reconfiguration process also contains the sub-path from $u$ to $v$. As with path reconfiguration we define two variants of this problem, the decision problem in which the goal is to 
decide if such a reconfiguration exists 
and the optimization problem in which the goal is to find the shortest possible reconfiguration.
\end{definition}

We will reduce sub-path reconfiguration to ``simpler'' instances of the same problem, where ``simpler'' means that the subpath becomes larger.

\begin{definition}
Let $(T,P,u,v)$ be a sub-path reconfiguration problem. We define a \emph{suitable detour vertex} to be a vertex $w$ in $P$ that is not internal to the subpath from $u$ to $v$ and
such that there exists a \emph{suitable detour path} of length $|P|$ in $T$ that starts at $u$ and coincides with $P$ from $u$ to $w$. We define the \emph{first suitable detour vertex} to be the suitable detour vertex farthest from $v$ (and therefore closest to the end of $P$), and we define a \emph{first suitable detour path} to be a suitable detour path for $w$. 
See \autoref{fig:subpath-reconfig}(b).   
\end{definition}

\paragraph{Reduction for subpath reconfiguration:}
\begin{itemize}
\item Find $w$, the first suitable detour vertex.  If $w$ does not exist then declare failure, and otherwise let $D$ be a first suitable detour path.
\item Let $x$ be the neighbour of $u$ in $P$ that is farthest from $v$.  Solve the subpath reconfiguration problem $(T,P, w, x)$.  
Let $\sigma_1$ be a shortest solution, and suppose it reconfigures $P$ to $P'$.  See \autoref{fig:subpath-reconfig}(b). 
\item Observe that $P'$ and $D$ overlap on the subpath $S$ from $u$ to $w$ and that their union is a simple path.  Let $\sigma_2$ be the reconfiguration sequence of length $|P| - |S|$ that slides $P'$ to $D$. 
\item Return the concatenation $\sigma_1 \sigma_2$.
\end{itemize}

We will show how to find $w$ and $D$ in the following subsection.  Note that in case there are multiple choices for the first suitable detour path $D$, all choices give the same length reconfiguration sequence.

\begin{figure}[t]
\centering\includegraphics[width=\textwidth]{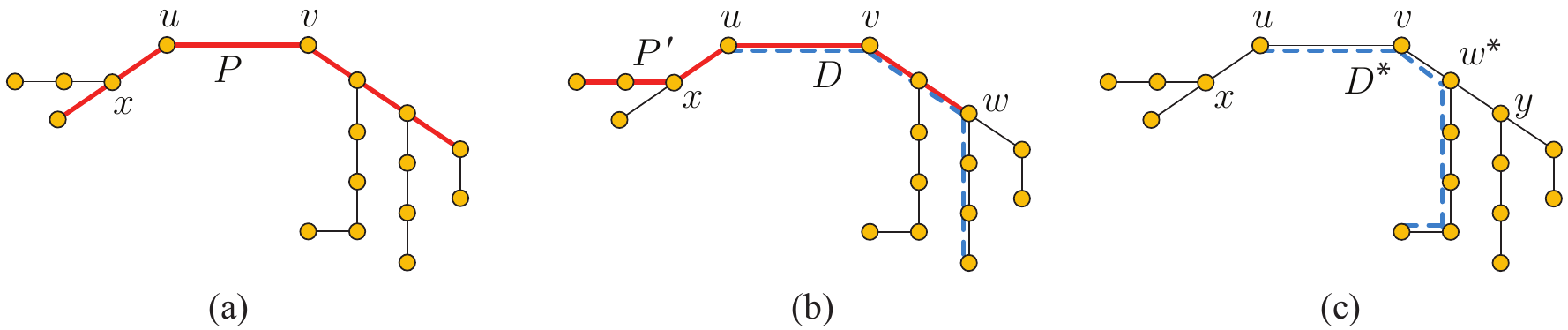}
\caption{Subpath reconfiguration: (a) the input; (b) the first suitable detour vertex $w$ and first suitable detour path $D$, and the path $P'$ used by the algorithm; (c) a path $D^*$ used by the optimum reconfiguration sequence.
}
\label{fig:subpath-reconfig}
\end{figure}

\begin{lemma}
\label{lem:sub-path-reduction}
The above algorithm finds a shortest reconfiguration sequence for the subpath reconfiguration problem $(T,P,u,v)$. 
\end{lemma}
\begin{proof}
Consider a shortest reconfiguration sequence $\sigma^*$ that solves $(T,P,u,v)$, and let $D^*$  be the final path at the end of $\sigma^*$.  Let $S^*$ be the maximal subpath shared by $P$ and $D^*$ and suppose that $S^*$ goes from $u$ to $w^*$.    Note that $S^*$ includes $v$.
Let $y$ be the neighbour of $w^*$ in $P$ that is outside $S^*$---note that $y$ exists otherwise $P=D^*$.
See \autoref{fig:subpath-reconfig}(c). 
Initially, $P$ contains both edges $u x$ and $w^* y$.  Consider the first time during $\sigma^*$ when one of these edges leaves $P$.  It cannot be $u x$, otherwise we would have solved the subpath reconfiguration before reaching $D^*$.  
Therefore there must be some shortest prefix $\sigma^*_1$ of $\sigma^*$ that reconfigures $P$ to a path $R^*$ that ends at $w^*$ and includes $S^* \cup u x$.  Furthermore, all edges of  $S^* \cup u x$ are included throughout $\sigma^*_1$.
Then $R^*$ and $D^*$ are in critical position sharing the subpath $S^*$.  $R^*$ can be reconfigured to $D^*$ by sliding it along $Q^*$ for $|P| - |S^*|$ steps, and there is no shorter reconfiguration because the paths lie in a tree.  Thus $|\sigma^*| = |\sigma^*_1| + |P| - |S^*|$.

We now compare what the algorithm does.     
Because $w^*$ is a suitable detour vertex for $(T,P,u,v)$, we know that a first suitable detour vertex, $w$, exists, and that $w$ is at least as far from $v$ as $w^*$.   As in the algorithm, let $S$ be the subpath from $u$ to $w$.  Then $w^*$ lies in $S$ and $|S| \ge |S^*|$.  The algorithm finds a shortest solution $\sigma_1$ to the subpath reconfiguration problem $(T,P,w,x)$.  We claim that $|\sigma_1| \le |\sigma^*_1|$.  This is because  $\sigma^*_1$ must move the endpoint of $P$ to $w$ before it can reach $w^*$.  
The solution returned by the algorithm has length $|\sigma_1| + |P| -|S| \le  |\sigma^*_1| + |P| - |S^*| = |\sigma^*|$.  Thus the algorithm finds a shortest reconfiguration sequence.   
\end{proof}

\subsection{Sub-path reconfiguration algorithm}

To solve sub-path reconfiguration quickly, we need to be able to quickly identify the first suitable detour vertex for a given sub-path reconfiguration problem.
And we need to find a first suitable detour path.

We can assume without loss of generality that the initial path $P$ we are given consists of the vertices $v_0, v_1,\dots v_{|P|}$ in order.
We say that $v_0$ is the \emph{left} endpoint of $P$ and that $v_{|P|}$ is the \emph{right} endpoint.
Based on this orientation of $P$, we can define two quantities $\Delta_L(k)$ and $\Delta_R(k)$ at each vertex $v_k$, as follows:
\begin{itemize}
\item Let $\Delta_L(k)$ be $\ell-k$, where $\ell$ is the length of the longest path in $T$ that ends at $v_k$ and does not use vertex $v_{k+1}$. That is, $\Delta_L(k)$ measures how much deeper than $v_0\dots v_k$ a path from $v_k$ can go. If we root $T$ at $v_{|P|}$, then $\ell$ is just the height (maximum distance from a leaf) of $v_k$, so all values $\Delta_L(k)$ can be computed in linear time by a simple bottom-up calculation on this rooted tree.
\item Let $\Delta_R(k)$ be $r-(|P|-k)$, where $r$ is the length of the longest path in $T$ that ends at $v_k$ and does not use vertex $v_{k-1}$. That is, $\Delta_R(k)$ measures how much deeper than $v_k\dots v_{|P|}$ a path from $v_k$ can go. If we root $T$ at $v_0$, then $r$ is just the height of $v_k$, so all values $\Delta_R(k)$ can be computed in linear time by a simple bottom-up calculation on this rooted tree.
\end{itemize}

\begin{observation}
\label{obs:test-suitable}
For a subpath reconfiguration problem $(T,P,v_i,v_j)$ with $i>j$,
$v_k$ is a suitable detour vertex if and only if $k\le j$ and 
$\Delta_L(k)\ge |P|-i$.
For a subpath reconfiguration problem $(T,P,v_i,v_j)$ with $i<j$,
$v_k$ is a suitable detour vertex if and only if $k\ge j$ and $\Delta_R(k)\ge i$.
\end{observation}

\begin{lemma}
\label{lem:find-fsd}
For a subpath reconfiguration problem $(T,P,v_i,v_j)$,
we can find the first suitable detour vertex $v_k$ (if it exists) in time $O(1+|k-j|)$.
\end{lemma}

\begin{proof}
We test the sequence of vertices $v_j,v_{j+1},\dots$ (if $i<j$)
or the sequence $v_j,v_{j-1},\dots$ (if $i<j$),
using \autoref{obs:test-suitable} to check in constant time whether each vertex in the sequence is a suitable detour vertex, until either reaching the end of path $P$ or finding a vertex that is not a suitable detour. If $v_j$ is not itself a suitable detour vertex, then no suitable detour exists.
If we reach the end of the sequence, then the last vertex that was tested (the vertex at the end of the sequence) is the first suitable detour vertex. If we find a vertex other than $v_j$ that is not a suitable detour vertex, then the previous vertex in the sequence is the first suitable detour vertex.
\end{proof}

\begin{lemma}
\label{lem:find-detour-path}
After linear-time preprocessing we can find a suitable detour path from the first suitable detour vertex, in time proportional to its length.
\end{lemma}

\begin{proof}
To preprocess the input,
let $F$ be the forest formed from $T$ by removing the edges of $P$.
We root each tree of this forest at its unique vertex in $P$ (directing the edges of the forest away from $P$) and with a single bottom-up traversal of the tree we calculate for each vertex $u$ of $F$ two pieces of information: the length of the longest directed path starting at $u$, and the identity of a child of $u$ that can be the neighbor of $u$ on one such longest path.

When we find a suitable detour vertex $v_k$,
the length of the corresponding detour path should be $|P|-|i-k|$.
We can find a path of this length, in time proportional to its length,
by starting from $v_k$ and following a sequence of pointers to children on longest paths.
\end{proof}

\begin{lemma}
\label{lem:sub-paths-in-trees}
We can solve the sub-path reconfiguration optimization problem in trees in linear time.
\end{lemma}

\begin{proof}
We repeatedly apply \autoref{lem:find-fsd} and \autoref{lem:find-detour-path} to find the first suitable detour vertex and its detour path, and then \autoref{lem:sub-path-reduction} to reduce the problem to a sub-problem of the same type, until either reaching a sub-problem $(T,P,a,b)$ where $a$ is one of the endpoints of $P$ (in which case the solution is the empty reconfiguration sequence) or reaching a sub-problem with no suitable detours (in which case there is no reconfiguration sequence).

Each reduction includes at least one new vertex $x$ in the sub-path between the two vertices defining the reduced sub-problem, so the total number of levels of reduction is at most $|P|$.
The time for each reduction step is proportional to a constant, plus the length of the sub-path from $v_j$ to $v_k$ (according to the notation of \autoref{lem:find-fsd}), plus the length of the detour path. The sub-paths are all edge-disjoint, as each one is outside the path between the two endpoints of the sub-path reconfiguration problem before the reduction, and inside this path after the reduction. Therefore their total length, and the total time for finding the whole sequence of reductions, adds to $|P|$.

The detour paths are all disjoint, so their total length, and the time for finding them, adds to $O(|T|)$. The subsequences $\sigma_2$ concatenated in each reduction step each involve sliding along one of these detour paths, and each edge of such a path contributes two units to the total length of these subsequences (one unit for the step in which the path first enters that edge, and a second unit for the step in which it leaves the edge again).
Therefore, the total length of all these subsequences is $O(|T|)$ and even if we write them out explicitly (as a single piece of information per step, the identity of the edge that the path slides into) the total time for doing so is $O(|T|$).
\end{proof}

\subsection{Reducing path reconfiguration to sub-path reconfiguration}

Consider a path reconfiguration problem $(T,P,Q)$ for starting path $P$ and ending path $Q$ in tree $T$.
We will separate into two cases depending on whether $P$ and $Q$ are edge-disjoint or not.

\subsubsection{Edge-disjoint paths} 
We begin with an algorithm 
to reduce path reconfiguration to sub-path reconfiguration in the case where the $P$ and $Q$ are edge-disjoint, as illustrated in \autoref{fig:critical-disjoint}(a).  
Let $R$ be the (unique) path in $T$ from $P$ to $Q$, connecting vertices $p\in P$ and $q\in Q$.   It is possible that $p=q$ and $R$ has no edges.
The idea of our solution is to find a critical position $P'$, $Q'$ where $P'$ ends at $p$ and $Q'$ ends at $q$.

\begin{figure}[t]
\centering\includegraphics[scale=0.8]{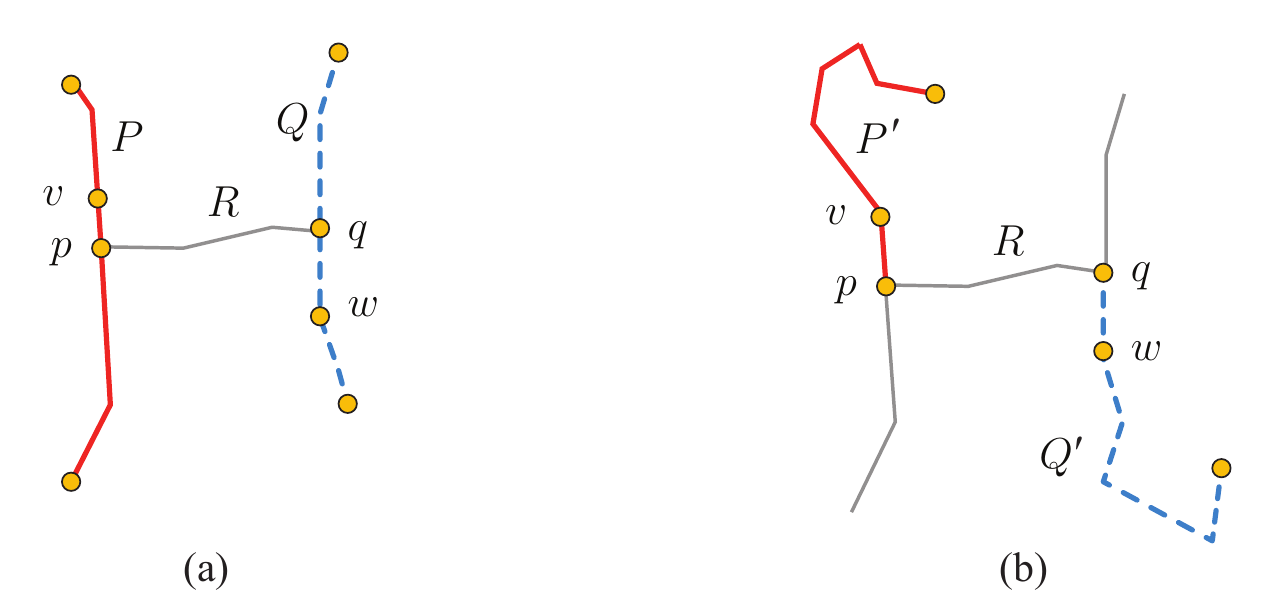}
\caption{Reconfiguring path $P$ to path $Q$ in the case where $P$ and $Q$ are edge-disjoint: (a) the initial situation; (b) one of the four possible critical positions (each of $P$/$Q$ can go up/down).}
\label{fig:critical-disjoint}
\end{figure}

\begin{figure}[t]
\centering\includegraphics[scale=0.45]{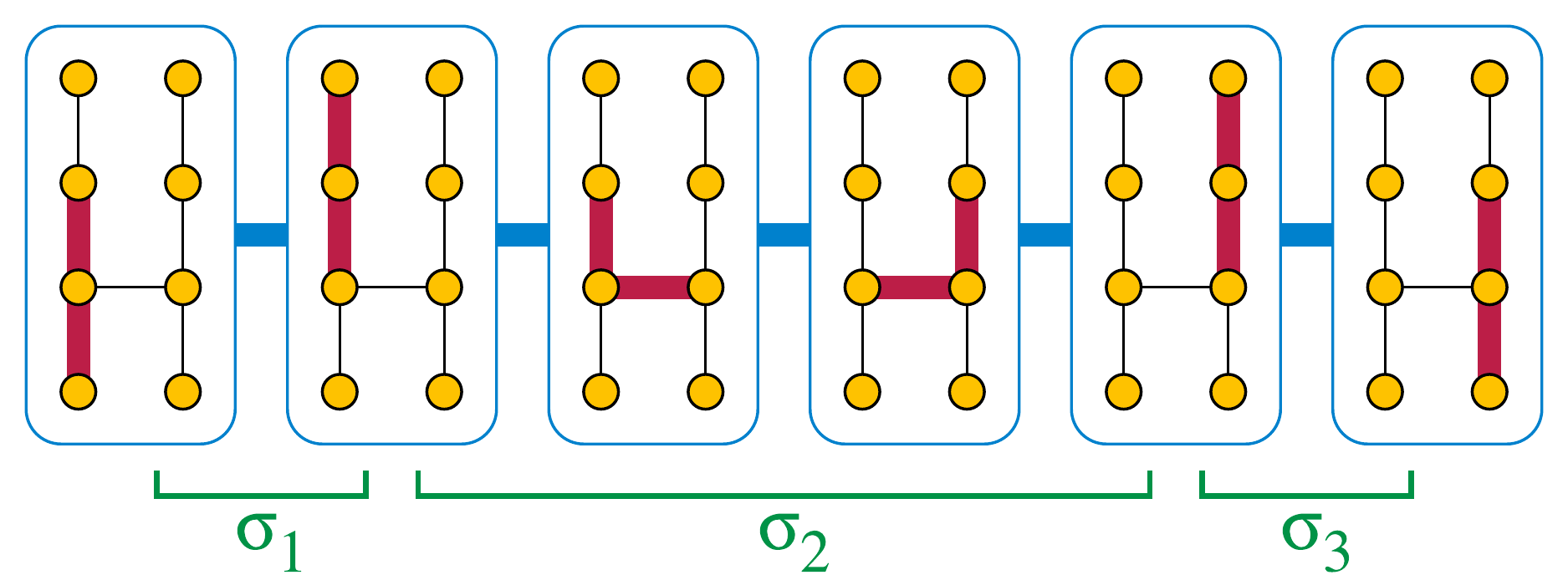}
\caption{A specific example of the algorithm reconfiguring one path to another when they are edge-disjoint.
}
\label{fig:tree-disjoint-case}
\end{figure}

\paragraph{Reduction for edge-disjoint paths:}
\begin{itemize}
\item For each $v$ a neighbor of $p$ in $P$, solve the sub-path reconfiguration optimization problem $(T,P,p,v)$.  If there are no solutions, then declare failure, and otherwise, let $\sigma_1$ be a shortest solution, and suppose it reconfigures $P$ to $P'$.
\item For each $w$ a neighbor of $q$ in $Q$, solve the sub-path reconfiguration optimization problem $(T,Q,q,w)$.  If there is no solution, then declare failure, and otherwise, let $\sigma_3$ be the time-reversal of a shortest solution, and suppose it reconfigures $Q'$ to $Q$.
\item Observe that $P'\cup Q'\cup R$ is a simple path, 
so the paths $P'$, $Q'$ are in critical position.  
Let $\sigma_2$ be the reconfiguration sequence of length $|P| + |R|$ that slides $P'$ along the path $P'\cup Q'\cup R$ to $Q'$.  
\item Return the concatenation $\sigma_1 \sigma_2 \sigma_3$.
\end{itemize}
This construction is illustrated in general in \autoref{fig:critical-disjoint} and for a specific example in \autoref{fig:tree-disjoint-case}.

\begin{lemma}
\label{lem:tree-disjoint}
The above algorithm finds a 
shortest reconfiguration sequence for $(T,P,Q)$ when $P$ and $Q$ are edge disjoint.
Furthermore, the algorithm solves at most four subpath reconfiguration problems. 
\end{lemma}
\begin{proof}
We first show that any shortest reconfiguration sequence, $\sigma^*$,  from $P$ to $Q$ necessarily has an intermediate critical position where $P'$ ends at $p$ and $Q'$ ends at $q$.  
If $p$ is an endpoint of $P$ then $P$ is already in critical position.
Otherwise, 
let the two neighbors of $p$ in $P$ be $v'$ and $v''$. 
The original path $P$ uses both edges $pv'$ and $pv''$, and this remains true during $\sigma^*$ until the 
reconfigured path ends at $p$.  Since $Q$ uses neither of these edges, 
there must be some shortest prefix $\sigma^*_1$ of $\sigma^*$ that reconfigures $P$ to a path, $P^*$ that ends at $p$ and uses exactly one of these two edges.  (Note:  $\sigma^*_1$ is empty if the initial $P$ ends at $p$.)
By an argument that is symmetric under time-reversal symmetry, there must be some shortest suffix $\sigma^*_3$ of $\sigma^*$ that reconfigures a path, $Q^*$ to $Q$, where $Q^*$  ends at $q$ and uses exactly one of its two incident edges in $Q$. 
Observe that $P^*$ and $Q^*$ are edge-disjoint and that neither shares an edge with $R$.
Furthermore, 
the prefix $\sigma^*_1$ and suffix $\sigma^*_3$ are disjoint, because any step that is intermediate to both of them would correspond to a path in $T$ that uses both edges of $P$ incident to $p$ and both edges of $Q$ incident to $q$, but no such path can exist.  Thus $P^*$ and $Q^*$ form a critical position.

$R^*$ can be reconfigured to $D^*$ by sliding it along $Q^*$ for $|P| - |S^*|$ steps, and there is no shorter reconfiguration because the paths lie in a tree.

We now examine how $\sigma^*$ solves the critical position.
Let $\sigma^*_2$ be the subsequence of $\sigma^*$ that reconfigures $P^*$ to $Q^*$, 
so that $\sigma^*$ is the concatenation $\sigma^*_1\sigma^*_2\sigma^*_3$.   
Note that $P^*\cup Q^*\cup R$ is a simple path of length $2|P|+|R|$.  
By sliding $P^*$ along this path, we can reconfigure $P^*$ to $Q^*$ in $|P| + |R|$ steps. 
Furthermore, because the graph is a tree, there is no shorter reconfiguration sequence from $P^*$ to $Q^*$.
Thus $|\sigma^*_2|=|P|+|R|$.

We now compare what the algorithm does.  Suppose that $P^*$ uses the edge $pv$ and $Q^*$ uses the edge $qw$.  The algorithm tries vertex $v$ and finds an optimum solution to the subpath reconfiguration problem $(T,P,p,v)$.  Since $\sigma^*_1$ is one possible solution, thus $|\sigma_1| \le |\sigma^*_1|$.
Similarly, the algorithm tries vertex $w$ and finds an optimum solution to the subpath reconfiguration problem $(T,Q,q,w)$. Since the time reversal of $\sigma^*_3$ is one possible solution, thus $|\sigma_3| \le |\sigma^*_3|$.  Furthermore, $\sigma_2$ and $\sigma^*_2$ have the same length $|P|+|R|$.  Therefore the algorithm finds a shortest reconfiguration sequence.
\end{proof}

\subsubsection{Overlapping paths} 

We now consider path reconfiguration where the paths $P$ and $Q$ share 
at least one edge. In particular, suppose they share 
a maximal subpath $S$ from vertex $u$ to vertex $v$. 
We will show that any shortest reconfiguration sequence involves a critical position $P',Q'$ of one of the following forms (see \autoref{fig:critical-overlap}):
\begin{enumerate}
\item $P'$ and $Q'$ are edge-disjoint,  and they both end at $u$ or they both end at $v$. 
\item $P'$ and $Q'$ share a non-empty prefix or suffix of $S$.  
\end{enumerate}
We will try all these types of critical positions to see which gives a shortest solution. 

Notation: Let $u_P$ be the neighbour of $u$ in $P$ but not in $Q$, if it exists, and let $u_Q$ be the neighbour of $u$ in $Q$ but not in $P$, if it exists.  Similarly, let $v_P$ be the neighbour of $v$ in $P$ but not in $Q$ and let $v_Q$ be the neighbour of $v$ in $Q$ but not in $P$ if they exist.  Let $u'$ be the neighbour of $u$ in $S$ and let $v'$ be the neighbour of $v$ in $S$.  
See \autoref{fig:critical-overlap}(a).

\begin{figure}[t]
\centering\includegraphics[scale=0.8]{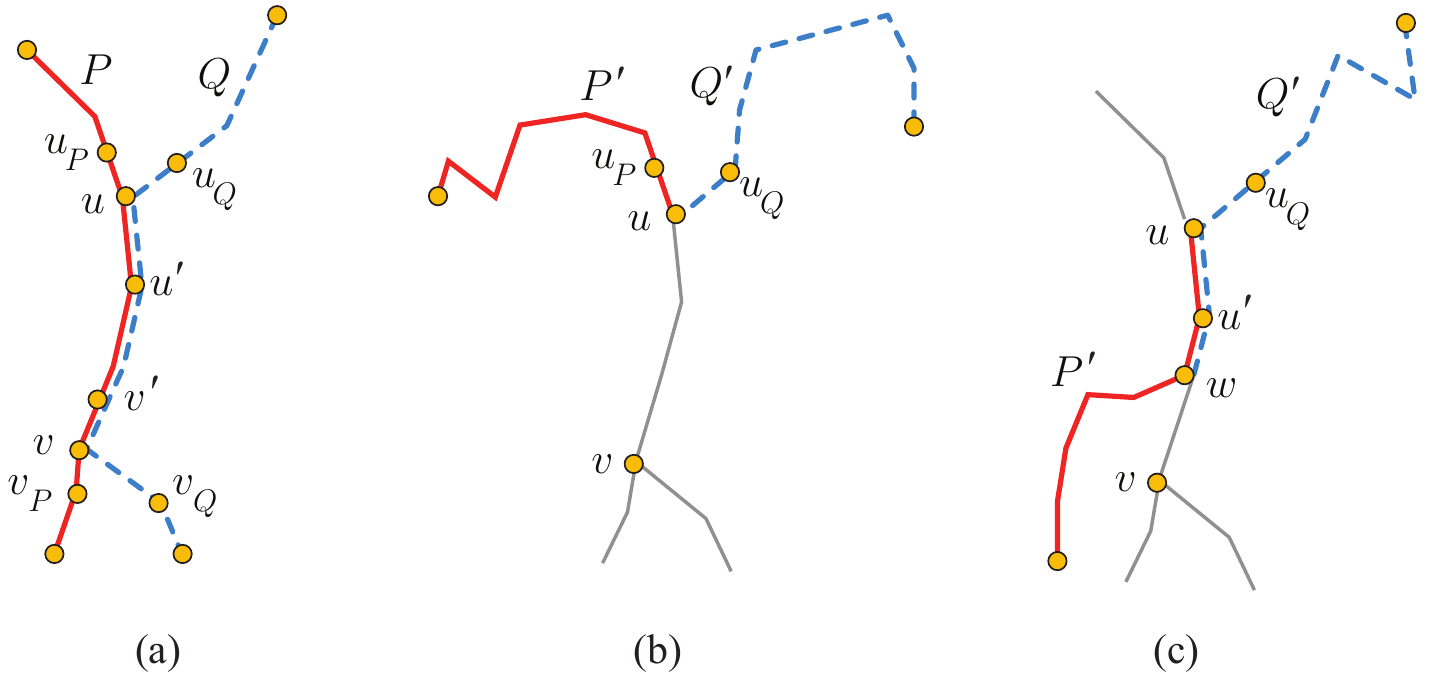}
\caption{Reconfiguring path $P$ to path $Q$ in the case where they overlap from vertex $u$ to vertex $v$: (a) the initial situation; (b) a critical position where $P'$ and $Q'$ are edge-disjoint and both end at $u$ (option 1); 
(c) a critical position where $P'$ and $Q'$ share the subpath from $u$ to $w$ (option 3).
}
\label{fig:critical-overlap}
\end{figure}

\paragraph{Reduction for paths that share edges:}
Try all four of the following options and return one that gives the shortest reconfiguration sequence:
\begin{enumerate}
\item  $P'$, $Q'$ share only vertex $u$.   See \autoref{fig:critical-overlap}(b).
\begin{itemize}
	\item Solve the sub-path reconfiguration problem $(T,P,u, u_P)$. If there is no solution or $u_P$ does not exist, then abandon this case, and otherwise, let $\sigma_1$ be a shortest solution, and suppose it reconfigures $P$ to $P'$. 
	\item Solve the sub-path reconfiguration problem $(T,Q,u, u_Q)$. If there is no solution of $u_Q$ does not exist, then abandon this case, and otherwise, let $\sigma_3$ be the time-reversal of a shortest solution, and suppose it reconfigures $Q'$ to $Q$.
	\item Let $\sigma_2$ be the reconfiguration sequence of length $|P|$ that slides $P'$ to $Q'$.
	\item Add $\sigma_1 \sigma_2 \sigma_3$ to the set of candidate solutions.
\end{itemize}

\item $P'$, $Q'$ share only vertex $v$.  This is symmetric to option 1 and is solved similarly.

\item $P'$, $Q'$ share a prefix of $S$.  See \autoref{fig:critical-overlap}(c).
\begin{itemize}
	\item Solve the sub-path reconfiguration problem $(T,P,u, u')$. If there is no solution, then abandon this case, and otherwise, let $w$ be the first suitable detour, let $\sigma_1$ be a shortest solution, and suppose it reconfigures $P$ to $P'$.  
	\item If $w$ is not in the subpath $S$ then replace $w$ by $v$.
	Solve the sub-path reconfiguration problem $(T,Q,w, u_Q)$. If there is no solution, then abandon this case, and otherwise, let $\sigma_3$ be the time-reversal of a shortest solution, and suppose it reconfigures $Q'$ to $Q$.
	\item Let $S'$ be the prefix of $S$ from $u$ to $w$.  Note that $P'$ and $Q'$ are in critical position, sharing $S'$.  Let $\sigma_2$ be the reconfiguration sequence of length $|P| - |S'|$ that slides $P'$ to $Q'$.
	\item Add $\sigma_1 \sigma_2 \sigma_3$ to the set of candidate solutions.
\end{itemize}

\item $P'$, $Q'$ share a suffix of $S$. This is symmetric to option 3 and is solved similarly:

\begin{itemize}
	\item Solve the sub-path reconfiguration problem $(T,P,v, v')$. If there is no solution, then abandon this case, and otherwise, let $w$ be the first suitable detour, let $\sigma_1$ be a shortest solution, and suppose it reconfigures $P$ to $P'$.  
	\item If $w$ is not in the subpath $S$ then replace $w$ by $u$.
	Solve the sub-path reconfiguration problem $(T,Q,w, v_Q)$. If there is no solution, then abandon this case, and otherwise, let $\sigma_3$ be the time-reversal of a shortest solution, and suppose it reconfigures $Q'$ to $Q$.
	\item Let $S'$ be the suffix of $S$ from $w$ to $v$.  Note that $P'$ and $Q'$ are in critical position, sharing $S'$.  Let $\sigma_2$ be the reconfiguration sequence of length $|P| - |S'|$ that slides $P'$ to $Q'$.
	\item Add $\sigma_1 \sigma_2 \sigma_3$ to the set of candidate solutions.
\end{itemize}
 
\end{enumerate}
Specific examples are illustrated in \autoref{fig:tree-flip-case}, \autoref{fig:tree-zigzag-case}, and \autoref{fig:reconfig-detour-example}.

\begin{figure}[t]
\centering\includegraphics[scale=0.45]{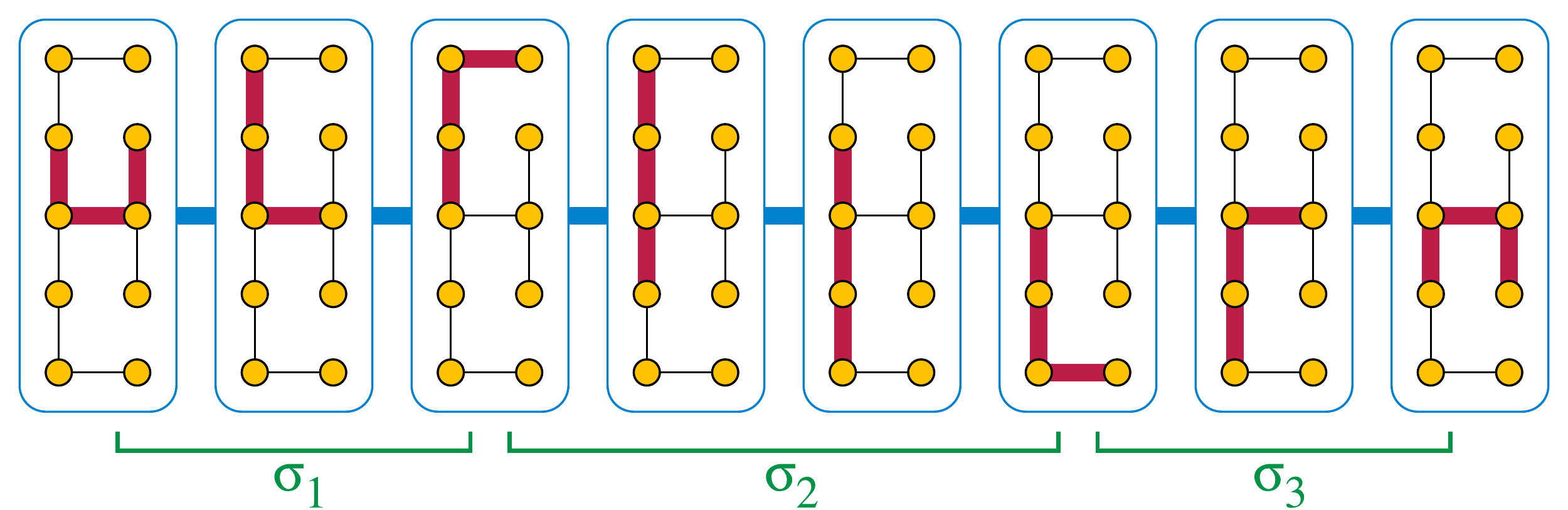}
\caption{The reconfiguration sequence found by the algorithm for two overlapping paths whose critical position involves edge-disjoint paths. 
}
\label{fig:tree-flip-case}
\end{figure}

\begin{figure}[t]
\centering\includegraphics[scale=0.45]{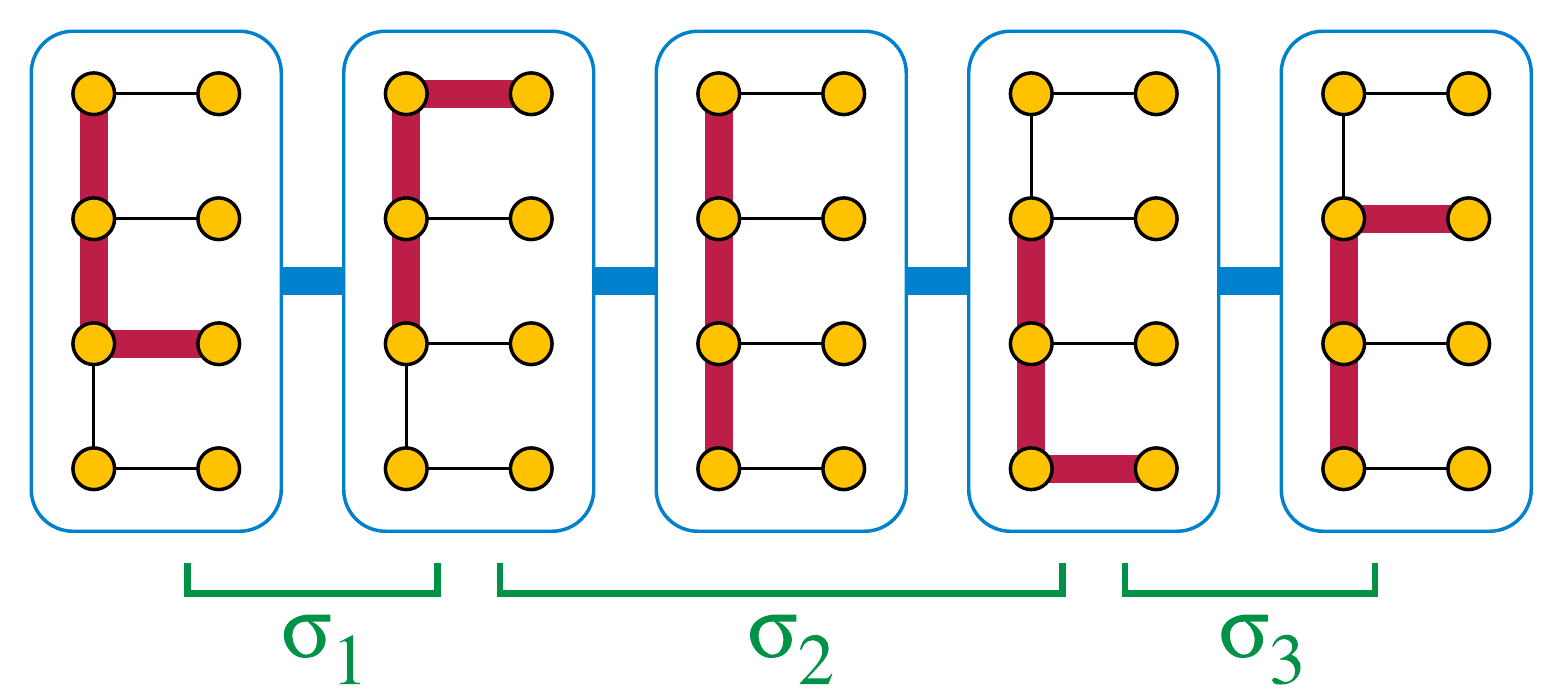}
\caption{The reconfiguration sequence found by the algorithm for two paths $P$ and $Q$ that overlap  and whose critical position overlaps on the same subpath.
}
\label{fig:tree-zigzag-case}
\end{figure}

\begin{figure}[t]
\centering\includegraphics[width=\textwidth]{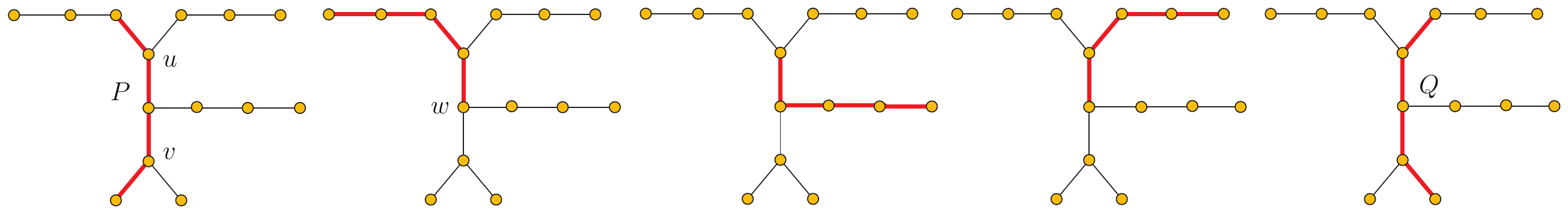}
\caption{
The reconfiguration sequence found by the algorithm for two paths that overlap on the subpath $S$ from $u$ to $v$  and whose critical position overlaps on a prefix of $S$. Not all steps are shown.}
\label{fig:reconfig-detour-example}
\end{figure}

\begin{lemma}
\label{lem:tree-overlap}
The above algorithm finds a 
shortest reconfiguration sequence for $(T,P,Q)$ when $P$ and $Q$ share at least one edge.
Furthermore, the algorithm solves at most eight subpath reconfiguration problems. 
\end{lemma}
\begin{proof}
The algorithm tries four options, each solving at most two subpath reconfiguration problems, which proves the ``furthermore'' statement.

We will show that any shortest reconfiguration sequence, $\sigma^*$, from $P$ to $Q$, necessarily has an intermediate critical position of one of the above types, and then argue that the algorithm finds a shortest solution.

We use notation as defined at the beginning of the algorithm.
Since $Q$ uses neither of the edges $u u_P$ nor $v v_P$, there must be some shortest prefix $\sigma^*_1$ of $\sigma^*$ that reconfigures $P$ to a path $P^*_1$ that does not use both these edges.  (It is possible that $P^*_1 = P$.)
It suffices to consider the case where $P^*_1$ does not use $v v_P$---the other case is obtained by exchanging the roles of $u$ and $v$.  (And note that the algorithm treats $u$ and $v$ equally.)
Then $u_P$ must exist, and $P^*_1$ must have one endpoint at $v$ and must include all the edges of $S \cup u u_P$.  Furthermore,  all the edges of $S \cup u u_P$ are included throughout $\sigma^*_1$.  See the first frames of \autoref{fig:reconfig-detour}.

We now apply the same argument to the time-reversal of the reconfiguration sequence $\sigma^*$.
Let $u_Q$ be the neighbour of $u$ in $Q$ but not in $P$, and let $v_Q$ be the neighbour of $v$ in $Q$ but not in $P$, if these vertices exist. 
Then there must be some shortest suffix $\sigma^*_3$ of $\sigma^*$ that reconfigures a path, $Q^*_1$ to $Q$, where either: $Q^*_1$  ends at $u$ and includes all the edges of $S \cup v v_Q$; or $Q^*_1$ ends at $v$ and includes all the edges of $S \cup u u_Q$.   In both cases, the included edges are in the path throughout $\sigma^*_3$.
We deal with these cases separately.

\noindent{\bf Case 1.}  $Q^*_1$  ends at $u$ and includes all the edges of $S \cup v v_Q$.
Then $P^*_1$ and $Q^*_1$ are in critical position and overlap on $S$.  We claim that the subsequence, $\sigma^*_2$ of $\sigma^*$ that reconfigures $P^*_1$ to $Q^*_1$ does so in $|P| - |S|$ steps by sliding $P_1$ along to $Q_1$.  
As before, the argument is that sliding is feasible, and there is no shorter solution because the graph is a tree.

Next we show that, in this case, the algorithm finds a reconfiguration sequence of length $|\sigma^*|$.
In option 4, the algorithm solves the subpath reconfiguration problem $(T,P,v,v')$.  Since $\sigma^*_1$ is a possible solution, the algorithm finds a reconfiguration sequence $\sigma_1$ of length at most $|\sigma^*_1|$.  Furthermore, $P^*_1$ includes all the edges of $S$, so the first suitable detour vertex $w$ for $(T,P,v,v')$ lies outside $S$.  Thus, the algorithm next resets $w$ to $u$ and  solves the subpath reconfiguration problem $(T,Q,u,v_Q)$.  The time-reversal of $\sigma^*_3$ is a possible solution, so the algorithm finds a reconfiguration sequence $\sigma_3$ of length at most $|\sigma^*_3|$.   Finally, the algorithm finds $\sigma_2$, a slide of length $|P| - |S| = |\sigma^*_2|$.  In total the algorithm finds a reconfiguration sequence of length at most  $|\sigma^*|$.  Since $\sigma^*$ was shortest, the algorithm finds a minimum length sequence.    

\noindent{\bf Case 2.}  $Q^*_1$ ends at $v$ and includes all the edges of $S \cup u u_Q$.
See \autoref{fig:reconfig-detour}.  
Consider what happens in $\sigma^*$ after $P^*_1$.  So long as the path uses both $u u'$ and $u u _P$, it cannot be reconfigured to $Q^*_1$.  Therefore, there must be some first time after $P^*_1$ where the path gets reconfigured to a path $P^*_2$ that has an endpoint at $u$ and uses exactly one of the two edges $u u'$ or $u u_P$.  
Applying the same argument to the time-reversed sequence, there must be some last time before $Q^*_1$ where the path $Q^*_2$ has an endpoint at $u$ and uses exactly one of the two edges $u u'$ or $u u_Q$.  Observe that $P^*_2$ does not occur after $Q^*_2$ in $\sigma^*$, although they may be equal.  
We consider 4 subcases.

\noindent{\bf Case (i).}
$P^*_2$ uses the edge $u u_P$ and $Q^*_2$ uses the edge $u u_Q$.   Then $P^*_2$ and $Q^*_2$ are in critical position, edge-disjoint, and sharing vertex $u$.   The algorithm finds this solution in option 1.  (The details of correctness are as in the proof of \autoref{lem:tree-disjoint}.)

\noindent{\bf Case (ii).}  $P^*_2$ uses the edge $u u_P$ and $Q^*_2$ uses the edge $u u'$.  Then in the step before $P^*_2$ the path---let's call it $R$---ends at $u'$ and uses $u_P u$ and $u u'$ so $R$ and $Q^*_2$ are in critical position (overlapping on $u u'$) so we could shortcut $\sigma^*$ by reconfiguring $R$ to $Q^*_2$, a contradiction to $\sigma^*$ being shortest.

\noindent{\bf Case (iii).} $P^*_2$ uses the edge $u u'$ and $Q^*_2$ uses the edge $u u_Q$.  This is ruled out by symmetry with the previous case. 

\begin{figure}[t]
\centering\includegraphics[width=\textwidth]{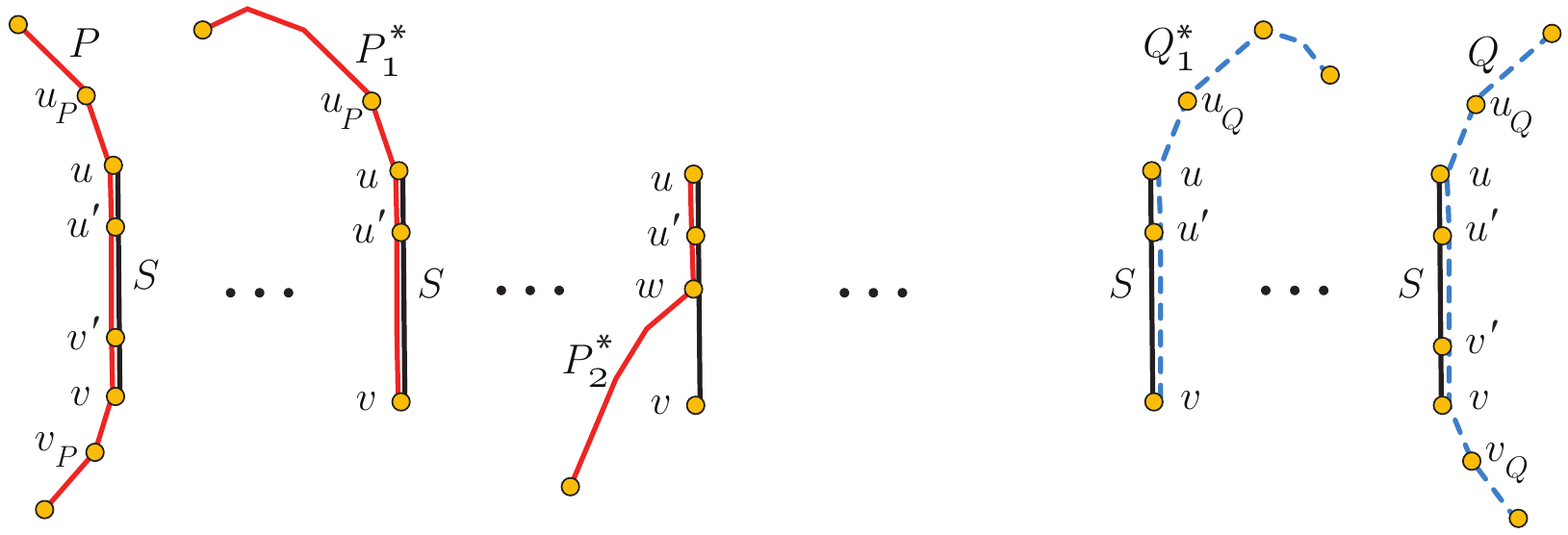}
\caption{The structure of an optimum reconfiguration sequence in Case 2 of \autoref{lem:tree-overlap}.
}
\label{fig:reconfig-detour}
\end{figure}

\noindent{\bf Case (iv).}
$P^*_2$ and $Q^*_2$ both  use the edge $u u'$.  Let $\sigma^*_A$ be the prefix of $\sigma^*$ that reconfigures $P$ to $P^*_2$ and let $\sigma^*_B$ be the suffix of $\sigma^*$ that reconfigures $Q^*_2$ to $Q$.  

In option 3, the algorithm solves the subpath reconfiguration problem $(T,P,u,u')$, 
with first suitable detour vertex $w$, reconfiguration sequence $\sigma_1$, and first suitable detour path $P'$.
Since $\sigma^*_A$ is a possible solution to $(T,P,u,u')$, the algorithm finds a reconfiguration sequence $\sigma_1$ of length at most $|\sigma^*_A|$.   

Consider an alternate algorithm that next solves the subpath reconfiguration problem $(T,Q,u,u')$.  Let  $\sigma'_2$ be the time-reversal of a shortest solution.  We first claim that $\sigma_1 \sigma'_2$ reconfigures $P$ to $Q$.   This is because the first suitable detour vertex $w$ is defined the same way for both $(T,P,u,u')$ and $(T,Q,u,u')$, and furthermore, we can use the same first suitable detour path $P'$
 for $(T,Q,u,u')$ (as noted after the subpath reconfiguration reduction).    

We next show that $\sigma_1 \sigma'_2$ is a shortest reconfiguration sequence from $P$ to $Q$.  We already noted that $|\sigma_1| \le |\sigma^*_A|$.  Similarly, the time-reversal of $\sigma^*_B$ is a possible solution to $(T,Q,u,u')$ so $|\sigma'_2| \le |\sigma^*_B|$.  
Thus $|\sigma_1 \sigma'_2| \le |\sigma^*_A| + |\sigma^*_B| \le |\sigma^*|$. 
Since $\sigma^*$ is a shortest reconfiguration sequence, therefore the alternate algorithm finds an optimum solution.  

Finally, we will show that the alternate algorithm and option 3 of the actual algorithm give the same length reconfiguration sequence.  In the alternate algorithm, $\sigma'_2$ is the time-reversal of a shortest solution to  $(T,Q,u,u')$.  
By \autoref{lem:sub-path-reduction}  $\sigma'_2 = \alpha \beta$ where $\beta$ is the time-reversal of a shortest solution to $(T,Q,w,u_Q)$ and $\alpha$ slides $P'$ to the path resulting from $\beta$ (these two paths overlap from $w$ to $u$). 
Option 3 of the algorithm finds a solution $\sigma_1 \sigma_2 \sigma_3$ where $\sigma_3$ is the time reversal of a shortest solution to $(T,Q,w, u_Q)$---which is the same subproblem solved in the alternate algorithm---and $\sigma_2$ slides $P'$ to the resulting path, which overlaps $P'$ from $w$ to $u$.
Therefore $|\beta| = |\sigma_3|$ and $|\alpha| =  |\sigma_2|$, which implies that  $|\sigma_1 \sigma'_2| = |\sigma_1 \sigma_2 \sigma_3|$.
Thus the alternate algorithm and the actual algorithm produce reconfiguration sequences of the same length.  This completes the proof that the algorithm finds a shortest reconfiguration sequence from $P$ to $Q$.
\end{proof}

\subsection{Overall algorithm}

Summarizing the results of this section, we have:

\begin{theorem}
We can solve the path reconfiguration optimization problem in trees in linear time.
\end{theorem}

\begin{proof}
We apply \autoref{lem:tree-disjoint} for edge-disjoint paths and \autoref{lem:tree-overlap} for overlapping paths to reduce the problem to the solution of a constant number of sub-path reconfiguration optimization problems, and \autoref{lem:sub-paths-in-trees} to solve each of these sub-path reconfiguration problems in linear time.
\end{proof}

\section{Hardness}

In this section we describe a reduction from nondeterministic constraint logic showing that path reconfiguration (with unbounded path length) is $\mathsf{PSPACE}$-complete even on graphs of bounded bandwidth. This result rules out the possibility (unless $\mathsf{P}=\mathsf{PSPACE}$) that our results on tree-like graph classes from \autoref{sec:treelike} can be extended to another tree-like class of graphs, the graphs of bounded treewidth. An example of our reduction is depicted in \autoref{fig:reduction}.

\begin{figure}[htb]
\centering\includegraphics[scale=0.35]{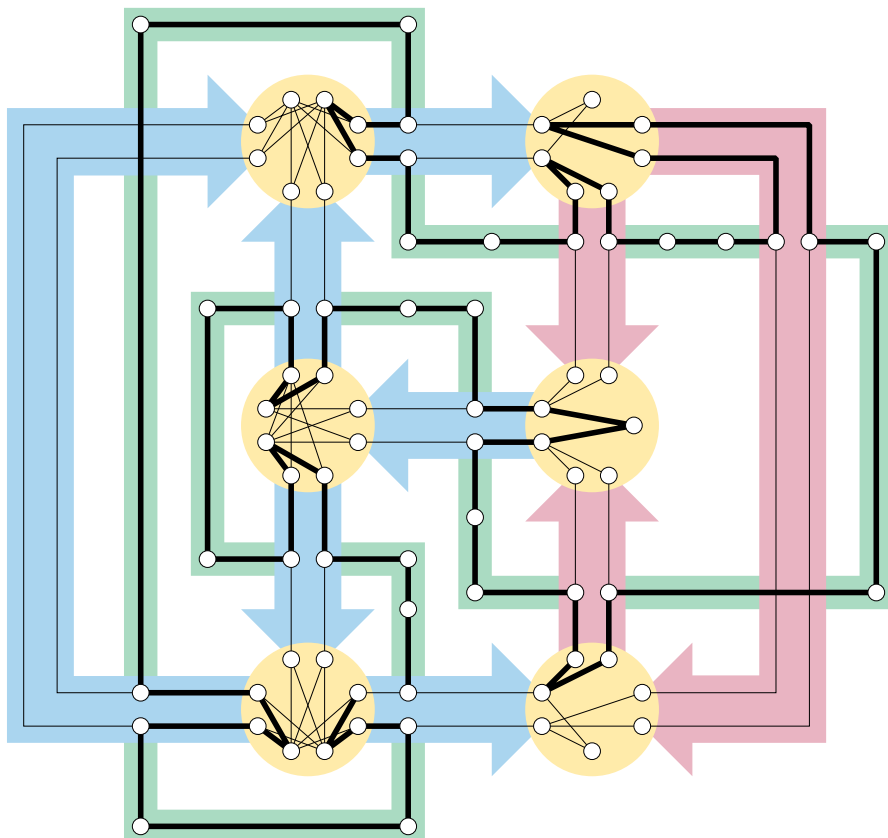}
\caption{Reduction from nondeterministic constraint logic to path reconfiguration. The shaded regions depict edge gadgets (blue and red), vertex gadgets (yellow), and connection gadgets (green). A valid path starting and ending near the lower left vertex gadget is shown by the heavy black edges.}
\label{fig:reduction}
\end{figure}

\subsection{Nondeterministic constraint logic}

For our hardness results, we will use \emph{nondeterministic constraint logic}, a reconfiguration problem on orientations of weighted undirected cubic graphs introduced by Demaine and Hearne~\cite{HeaDem-ICALP-02,HeaDem-09} as a fundamental hard problem for proving the hardness of a wide variety of games and puzzles.

An instance of nondeterministic constraint logic is determined by an undirected graph $G$ with exactly three edges per vertex. The edges have weight either one or two, and the total weight at each vertex is required to be even. A \emph{valid orientation} of this graph is an assignment of a direction to each edge so that the total incoming weight at each vertex is at least two. For the type of reconfiguration problem that we study here, we are interested in whether it is possible to get from a given starting orientation to a given goal orientation (both valid) by moves that reverse the direction of a single edge and preserve the validity of the orientation. This problem is known to be $\mathsf{PSPACE}$-complete, even when restricted to graphs of bounded bandwidth~\cite{vdZ-IPEC-15}.

By convention, when visualizing instances of nondeterministic constraint logic, the edges of weight one are drawn as red and the edges of weight two are drawn as blue. There are two types of vertex: \emph{AND gates} with two red edges and one blue edge, and \emph{OR gates} with three blue edges. For an AND gate, the blue edge can only be oriented outwards if both red edges are oriented inwards. For an OR gate, any one of the blue edges can be oriented outwards as long as at least one of the other two blue edges is oriented inwards.

\subsection{Gadgets}
\label{sec:gadgets}

Recall that the constraint graph of a nondeterministic constraint logic instance has edges of two types (blue and red), each of which can be oriented in either direction as the instance is reconfigured, and vertices of two types (AND vertices, with two red edges and one blue edge, and OR vertices, with three blue edges) which require at each step that at least one blue edge or two red edges be oriented inwards. The instances of path reconfiguration created by our reduction will have subgraphs (``gadgets'') corresponding to these constraint graph features: edge gadgets corresponding to the edges of the constraint graph (depicted as the blue and red regions of \autoref{fig:reduction}) and vertex gadgets corresponding to the vertices of the constraint graph (the yellow regions of \autoref{fig:reduction}). The edge gadgets will not be differentiated by color (the red and blue edges of the constraint graph generate edge gadgets of the same type) but we will have two different types of vertex gadgets for the two different types of constraint graph vertex. In addition we will also have a third type of gadget, \emph{connection gadgets}, depicted as green in \autoref{fig:reduction}. The edge gadgets of the reduction will be organized into a single cyclic sequence (not necessarily related to the connectivity of the constraint graph) and each pair of consecutive edge gadgets in this cyclic sequence will be connected to each other by a connection gadget.

In more detail:

\begin{itemize}
\item Each vertex gadget has three pairs of entrance and exit vertices (one pair of an entrance and exit vertex for each incident edge gadget) and one or two other interior vertices.
\item The vertex gadgets for OR vertices of the constraint graph (the three on the left side of \autoref{fig:reduction}) have two interior vertices, with each entrance or exit vertex connected to each interior vertex to form a $K_{6,2}$ complete bipartite subgraph.
\item The vertex gadgets for AND vertices of the constraint graph (the three on the right side of \autoref{fig:reduction}) form a tree with six edges. Two of these edges form a path connecting the blue entrance and exit vertices to each other through the single interior vertex. Two more edges connect one pair of a red entrance and exit vertex to each other through the blue entrance vertex,
and the final two edges connect the other pair of a red entrance and exit vertex to each other through the blue exit vertex.
\item Each edge gadget consists of two disjoint paths. One of these two paths connects the two entrance vertices of the two adjacent vertex gadgets and the other connects the two exit vertices of the same two gadgets. Both paths must have equal, even length, but we do not require that these lengths be uniform across all edge gadgets. (In \autoref{fig:reduction}, all edge gadget paths have length 2.)
\item Each connector gadget consists of a single path, from the midpoint of the exit path of one edge gadget to the midpoint of the entrance path of another edge gadget. All paths must have length that is at least as large as a threshold length $\ell$ to be determined later, but they are not required to have uniform lengths. (In \autoref{fig:reduction}, all connector gadget paths have length 3.)
\end{itemize}

\subsection{Valid paths}

Along with its underlying undirected graph, an instance of nondeterministic constraint logic also comes equipped with two orientations of the graph (consistent with its constraints): an \emph{initial orientation} and a \emph{final orientation}. The corresponding objects in our reduction to path reconfiguration will be paths of a certain special form, which we call \emph{valid paths}.

It is convenient to start by defining a \emph{valid cycle} instead of a valid path.
A valid cycle is a cycle $C$ in the path graph that meets the following constraints:
\begin{itemize}
\item $C$ uses all the edges of all the connector gadgets.
\item Within each edge gadget, $C$ uses half of the entrance path, from the midpoint (where it is reached by a connector gadget) to the entrance vertex of one of its two incident vertex gadgets,
and then half of the exit path, from the exit vertex of the same vertex gadget to the midpoint of the exit path (where it reaches another connector gadget).
\item Within each vertex gadget $v$, for each incident edge gadget $e$ such that $C$ follows the paths in $e$ from the midpoints to the entrance and exit of $v$, $C$ connects the entrance and exit vertices for $e$ by a path of length two.
\end{itemize}

All valid cycles have the same length. We define a valid path to be a path that can be formed by removing a single edge from a valid cycle. For instance, the path of heavy black edges in \autoref{fig:reduction} is a valid path.

\begin{observation}
\label{obs:valid-path-completion}
Every valid path has a unique completion to a valid cycle.
\end{observation}

We now define a mapping $\pi$ from valid cycles (or equivalently by \autoref{obs:valid-path-completion}, valid paths) to consistent orientations of the underlying constraint graph. The orientation $\pi(C)$ is determined, for each edge $e$ of the constraint graph, by the following simple rule: orient $e$ \emph{away} from the vertex through which the edge gadget for $e$ is connected.

\begin{lemma}
\label{lem:valid-is-valid}
If $C$ is a valid cycle, $\pi(C)$ is a consistent orientations of the underlying constraint graph.
\end{lemma}

\begin{proof}
In an OR vertex gadget, the only two-edge paths from an entrance to an exit vertex go through one of the interior vertices. There are only two such interior vertices, so there can be only two such paths in $C$, and two edges oriented away from the vertex in $\pi(C)$. Therefore, at least one of the three incident edges must be oriented inwards, meeting the requirements for a consistent orientation at that vertex.

In an AND vertex gadget, the unique two-edge paths from a red entrance vertex to the corresponding red exit vertex pass through an entrance or exit for the blue vertex. Therefore, if either of these paths is used by $C$ (corresponding to a red edge being oriented outwards in $\pi(C)$), the blue two-edge path cannot also be used, and the blue edge must be oriented inwards in $\pi(C)$. Thus, $\pi(C)$ meets the requirements for a consistent orientation at that vertex.
\end{proof}

\begin{lemma}
\label{lem:valid-surjective}
The mapping $\pi$ is a surjective mapping from valid cycles (or equivalently by \autoref{obs:valid-path-completion}, valid paths) to consistent orientations of the underlying constraint graph.
\end{lemma}

\begin{proof}
By \autoref{lem:valid-is-valid}, every valid cycle $C$ corresponds to a consistent orientation $\pi(C)$ so it remains to show that every consistent orientation arises in this way.
That is, given a consistent orientation on the constraint graph, we must construct a valid cycle $C$ such that $\pi(C)$ is that orientation. There is no choice for how $C$ passes through the connector gadgets (it must use all of them), nor through the edge gadgets (it must use the two half-paths on the side away from where the constraint graph edge is directed), so the only remaining part of $C$ to construct is its two-edge paths through each vertex gadget.

In a vertex gadget for an OR vertex of the constraint graph, the consistency of the orientation ensures that $C$ is required to connect at most two pairs of entrance and exit vertices. We can do so greedily, using one of the two interior vertices for a two-edge path for an arbitrarily chosen pair of entrance and exit vertices and then (if there is another pair of entrance and exit vertices to be connected) doing so through the other interior vertex.

In a vertex gadget for an AND vertex of the constraint graph, there is a unique choice of two-edge path for each pair of an entrance and exit vertex that must be connected.
These paths are disjoint unless there is both an outgoing blue edge and an outgoing red edge,
which cannot happen in a consistent orientation of the constraint graph.
\end{proof}

We note, however, that even when considering valid cycles rather than valid paths, $\pi$ is not injective: the same consistent orientation on the constraint graph may be represented by more than one valid cycle. This failure of injectivity will cause complications in our reduction which we handle in \autoref{sec:parity}.

\subsection{Reconfiguration}

We define a \emph{valid reconfiguration move} to be a sequence of path reconfiguration moves that take one valid path to another, according to the following steps:
\begin{itemize}
\item First, slide the path around its valid cycle, so that one of its endpoints is at the vertex where a connector gadget and an edge gadget meet.
\item Then, slide the path along the edge gadget, away from the vertex gadget that it was previously connected through, so that it instead passes through a two-edge path in the opposite vertex gadget.
\end{itemize}
In \autoref{fig:reduction}, the path is already at the vertex where a connector gadget and an edge gadget meet. However, it cannot reconnect through the opposite vertex gadget, because its two-edge path through that gadget is blocked by another part of the path. Therefore, in order to make a valid reconfiguration move in the figure, it would be necessary to first slide the path to end at a different edge gadget.

\begin{lemma}
If $P$ is a valid path, then every valid reconfiguration move corresponds to a consistent reorientation of a single edge in $\pi(P)$ and every consistent reorientation of a single edge in $\pi(P)$ can be realized by a valid reconfiguration move.
\end{lemma}

\begin{proof}
It is straightforward to verify that a valid reconfiguration move changes the orientation of a single edge in $\pi(P)$ The fact that this change of orientation is consistent with the constraints of the constraint graph follows immediately from \autoref{lem:valid-is-valid}.

If $P$ is a valid path, and $e$ is an edge that can be consistently reoriented in $\pi(P)$, then we may reorient $e$ by shifting $P$ along its corresponding valid cycle so that it ends at the edge gadget for $e$ and then following a path through the opposite vertex gadget than the one $P$ previously passed through. If this is the gadget for an AND vertex of the constraint graph, then its path is uniquely determined, and \autoref{lem:valid-surjective} ensures that it is free to be used by $P$.
If it is the gadget for an OR vertex, \autoref{lem:valid-surjective} ensures that at least one of the two interior vertices is available to be used by $P$; if both are available, we can choose which one to use arbitrarily.
\end{proof}

\begin{corollary}
\label{cor:functor}
The mapping $\pi$ can be extended to a surjective map from reconfiguration sequences on the graph of the reduction to reconfiguration sequences on the underlying constraint graph.
\end{corollary}

Again, however, this map is not necessarily injective, an issue that we deal with in the next section.

\subsection{OR-vertex parity}
\label{sec:parity}

When a consistent orientation of the constraint graph has one or two outgoing edges at an OR vertex $v$, the corresponding valid cycles in the graph of the reduction will use one or two two-edge paths through the vertex gadget for $v$.  If there is one two-edge path, it can use either of the two interior vertices of the vertex gadget. If there are two two-edge paths, they can be assigned to the interior vertices in either of two ways. Thus, in either case there are two choices for how the valid cycle passes through the vertex gadget. We refer to this choice as the \emph{parity} of $v$. Because of this freedom, for a consistent orientation in which there are $k$ OR vertices having one or two outgoing edges, there will be $2^k$ valid cycles corresponding to that orientation in the reduction graph.

\begin{figure}[t]
\centering\includegraphics[scale=0.6]{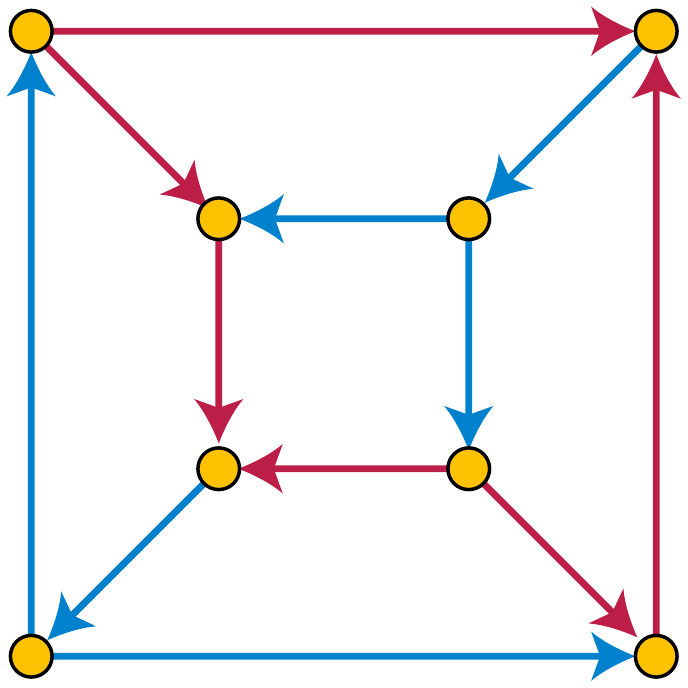}
\caption{An instance of nondeterministic constraint logic in which the OR vertices have two outgoing blue edges and cannot be reconfigured.}
\label{fig:locked-ncl}
\end{figure}

It is not always the case that one of these $2^k$ valid cycles can be reconfigured into a different valid cycle that represents the same orientation of the constraint graph.  \autoref{fig:locked-ncl} depicts an example of this phenomenon. In the constraint graph shown, only the upper left and lower right diagonal red edges can be reoriented; the 10 remaining edges are all locked into the orientations shown in the figure. In particular, the two OR vertices of the figure are both locked in a configuration where they have two outgoing blue edges. In the corresponding valid cycles for the reduction graph,
there are two choices for the parity of the corresponding vertex gadgets. It is not possible to change the parity by using valid reconfiguration moves, because such a move can only change one of the two-edge paths through the vertex gadget at a time, and both two-edge paths would need to be changed simultaneously.

To handle this issue, we choose valid paths for the initial and final orientations of the constraint graph instance with the property that, when the initial and final orientations are both locally the same at any given OR vertex, the corresponding valid paths have the same orientation at that vertex. In this way, if a reconfiguration sequence for the orientations of the constraint graph fails to make any reconfiguration moves at the given OR vertex, the corresponding valid reconfiguration sequence for paths in the reduction graph will never change the parity at that vertex, and will automatically produce a valid path with the correct parity. If, on the other hand, $v$ is an OR vertex of the constraint graph at which the reconfiguration sequence for the orientations of the constraint graph makes a change, then at least one of the orientations in the reconfiguration sequence will have at most one outgoing edge at $v$. If a corresponding reconfiguration sequence for valid paths of the constraint graph produces the wrong parity at $v$, it can be modified by inserting a sequence of moves that change the parity at a time when the valid path passes through $v$ at most once.

The sequence of moves that changes the parity at a vertex is not itself a valid reconfiguration move, because it does not reorient the corresponding constraint graph edge, but it is very similar. We slide the valid path along its valid cycle until it reaches the entrance of the outgoing edge at the corresponding vertex gadget, then slide it through the other choice of two-edge path through the vertex gadget. The change of parity is global: if we insert such a move anywhere that it can be performed within a valid reconfiguration sequence, it will change the parity at that vertex, and only at that vertex, in the valid path reached at the end of the sequence.

Summarizing, we have:

\begin{lemma}
\label{lem:choose-valid-paths}
Given an initial and final orientation of a constraint graph, it is possible to choose corresponding initial and final valid paths in the reduction graph such that there exists a reconfiguration sequence from the initial and final orientations if and only if there exists a sequence of valid reconfiguration moves and parity-change moves that take the initial path to the final path within the reduction graph.
\end{lemma}

\subsection{Deviations from validity}

Unfortunately, not every reconfiguration sequence, starting from a valid path in the reduction graph, consists only of valid reconfiguration moves and parity-change moves. A reconfiguration sequence can deviate from these moves in three ways.

First, it is possible to reverse a valid reconfiguration move or parity-change move. The reverse of a valid reconfiguration move shifts the valid path so that its endpoint is on the exit path (instead of the entrance) and then traverses backwards through an incident vertex gadget. These kinds of variant moves are harmless, because they have the same effect as performing the move in the forward direction.

Second, it is possible for the end of the current path to enter a vertex gadget $v$ at the entrance (or exit) vertex for one incident edge gadget $e$, but then to exit the vertex gadget into a different edge gadget $f$. However, if the connector gadgets are long enough, it will not be able to progress through $f$ to other parts of the reduction graph, because the midpoints of the two paths in $f$ will still be occupied by other parts of the path. Therefore, a reconfiguration sequence that makes this kind of variation cannot use it to reach another valid path; the only possible steps after being blocked in this way are to backtrack back through $v$ to the proper exit or entrance vertex for $e$. A straightforward case analysis shows that, if all edge gadget paths have length at most $2r$, and all connector gadget paths have length at least $r+2$, then this kind of variation will always be blocked. For instance, \autoref{fig:reduction} obeys these constraints with $r=1$.

Third, and most problematically, in an OR vertex gadget with one outgoing edge (such as the upper left vertex gadget of \autoref{fig:reduction}) it is possible for a reconfigured path to connect the entrance and exit vertices of its incident edge gadget by a path of length four instead of by the path of length two that would be required in a valid path. Although doing so would prevent reconfiguration of the other edge gadgets incident to that vertex gadget, it would also have the effect of increasing the length of the gap between the endpoints of the path as measured along the corresponding valid cycle. Potentially, if enough variant reconfigurations of this type could be performed, the gap could be made so large as to encompass multiple vertex gadgets, allowing simultaneous reconfiguration of multiple edges in the corresponding constraint graph. We do not know how to forbid these variant reconfigurations, but we can prevent them from having this harmful effect by making the connector gadget paths even longer, so that (even if some reconfigurations of this type occur) it is still impossible for the gap in the valid path to spread over more than one vertex gadget. In particular, with $r$ as above and $o$ denoting the number of OR vertex gadgets, we let $\ell$ (the minimum connector gadget length) be $2o+r+2$.
With connector gadgets of length at least $\ell$, even a gap of length $2o+1$ between the endpoints of the path (along a corresponding valid cycle) would not be enough to allow a different valid cycle to be reached except via reorientations that could have been performed as valid reconfiguration moves.

Again, we summarize the results of this section in a lemma:

\begin{lemma}
\label{lem:long-enough-gadgets}
If all connector gadgets have length at least $2o+r+2$, where $r$ denotes half of the maximum length of an edge gadget path and $o$ denotes the number of OR vertex gadgets, and if it is possible to reconfigure a valid path $P$ into a valid path $Q$, then it is possible to do so using only valid reconfiguration moves and parity-change moves.
\end{lemma}

\subsection{Bandwidth}

A \emph{linear layout} of an $n$-vertex graph is a labeling of its vertices with distinct integers in the range from 1 to $n$. The \emph{width} of a layout is the maximum difference between the labels of the two endpoints of any edge, and the \emph{bandwidth} of the graph is the minimum width of any of its layouts. Although bandwidth is hard to compute, and hard to approximate even for some very simple graphs~\cite{DubFeiUng-JCSS-11}, this will not be problematic for us, as (from the previous hardness reduction for low-bandwidth nondeterministic constraint logic~\cite{vdZ-IPEC-15}) we can assume that the input to our reduction is a constraint graph with an already-given low-bandwidth layout. As we show in this section, there is enough remaining flexibility in our reduction (in its choice of path lengths and the ordering to connect the edge gadgets via the connector gadgets) that we can preserve the bounded bandwidth of its instances.

\begin{lemma}
\label{lem:bandwidth-stays-low}
The reduction described above can be implemented so that, for constraint graphs of bounded bandwidth with a given bounded-width layout, we can produce reduction graphs of bounded bandwidth.
\end{lemma}

\begin{proof}
We will assign non-integer numeric labels to the vertices of the reduction graph, allowing some vertices to have equal labels. We then perturb these labels to produce a total ordering of all the vertices, as follows.

To begin with, let $n$ be the number of vertices in the constraint graph. We set the length of all edge gadget paths to $2n$, and we set the length of all connector gadget paths to $3n+2$, meeting the requirements of \autoref{lem:long-enough-gadgets}. Let $G'$ be the graph formed from the constraint graph by replacing each edge by a two-edge path; if $G$ has bounded bandwidth, so does $G'$, and we can construct a bounded-width linear layout of $G'$ from the layout of $G$.
We label each vertex in a vertex gadget of the reduction graph by the integer label of the corresponding constraint graph vertex in $G'$. We label the midpoint of each edge gadget path by the integer label of the subdivision vertex on its corresponding two-edge path.
We label the remaining interior vertices of the paths of each edge gadget in such a way that they are evenly spaced between the labels of the constraint graph vertex and subdivision vertex.

This labeling also gives us a total ordering of the midpoints of the edge gadget paths. We use this total ordering to choose how to connect the edge gadgets by a cycle of connector gadgets. This cycle will consist of two monotone paths, $X$ and $Y$, from the leftmost edge gadget midpoint to the rightmost edge gadget midpoint. Each edge gadget between the leftmost and rightmost is assigned to one of these two paths, so that the assignment alternates between $X$ and $Y$ in the total ordering of edge gadget midpoints. We then connect each consecutive pair of edge gadgets in either $X$ or $Y$ by a connector gadget.
We label the interior vertices of each connector gadget in such a way that they are evenly spaced between the labels of the two incident edge gadget midpoints.

If the given graph has bounded bandwidth, then each edge gadget path or connector gadget path will have endpoints whose labels differ by a constant. Therefore, the difference between the labels of any two adjacent vertices within these gadgets is $O(1/n)$, and the difference between the labels of any two adjacent vertices within a vertex gadget is of course zero. However, for any interval $I$ of the real number line of length $O(1/n)$, our construction will assign only $O(1)$ vertices of the reduction graph to have labels within $I$. Therefore, if the labeling is arbitrarily perturbed to produce a total ordering, and the ranks in this total ordering are used as a linear layout of the reduction graph, the bandwidth of the resulting graph will be bounded.
\end{proof}

\subsection{Hardness}

Combining the results above, we have:

\begin{theorem}
The path reconfiguration decision problem is $\mathsf{PSPACE}$-complete, even for graphs of bounded bandwidth.
\end{theorem}

\begin{proof}
Membership in $\mathsf{PSPACE}$ is clear. To prove completeness, we reduce from nondeterministic constraint logic in graphs of bounded bandwidth, with a given low-bandwidth layout; this was previously known to be $\mathsf{PSPACE}$-complete~\cite{vdZ-IPEC-15}). The gadgets of the reduction are those given in \autoref{sec:gadgets}, with the detailed choice of path lengths and connector gadget ordering of \autoref{lem:bandwidth-stays-low}. The initial and final orientations of the given constraint graph are transformed into start and goal paths of a path reconfiguration problem according to \autoref{lem:choose-valid-paths}.

It is straightforward to perform the reduction in polynomial time, and by \autoref{lem:bandwidth-stays-low}, the bandwidth of the resulting reduction graph is bounded. If a solution exists to the given nondeterministic constraint logic problem, then one exists for the corresponding path reconfiguration problem by \autoref{lem:choose-valid-paths}. If no solution exists for the constraint logic problem,
then correspondingly no solution exists for the path reconfiguration problem by \autoref{lem:long-enough-gadgets} and \autoref{cor:functor}.
\end{proof}

\bibliographystyle{amsplainurl}
\bibliography{snake}

\end{document}